\newtheorem{theorem}{Theorem}[section]
\newtheorem{proposition}[theorem]{Proposition}
\newtheorem{corollary}[theorem]{Corollary}
\newtheorem{claim}{Claim}
\newenvironment{reduction}{\bigskip\noindent {\bf Reduction A.}}{\qed}
\newcounter{CounterTodo}
\newenvironment{proofClaim}[1]{\noindent {\em Proof of Claim~$\ref{#1}$.}}{\hfill$\blacksquare$\bigskip}
\begin{document}

\onehalfspace

\title{Expanded-clique graphs and the domination problem}

\author[1]{Mitre~C.~Dourado\thanks{mitre@ic.ufrj.br. Partially supported by CNPq grant numbers 305404/2020-2 and 403601/2023-1, and FAPERJ grant number SEI-260003/015314/2021}}
\author[2]{Rodolfo~A.~Oliveira\thanks{rodolfooliveira@id.uff.br}}
\author[1]{Vitor~Ponciano\thanks{vtponciano@ufrj.br}}
\author[1]{\\ Alessandra~B.~Queiróz\thanks{alessandrabarbosa@ufrj.br}}
\author[1]{Rômulo~L.~O.~Silva\thanks{romulolo@ufpa.br}}

\affil[1]{Instituto de Computação, Universidade Federal do Rio de Janeiro, \linebreak Rio de Janeiro, Brazil} 
\affil[2]{Instituto do Noroeste Fluminense, Universidade Federal Fluminense, Brazil}

\maketitle

\begin{abstract}
Given a graph $G$ such that each vertex $v_i$ has a value $f(v_i)$, the expanded-clique graph $H$ is the graph where each vertex $v_i$ of $G$ becomes a clique $V_i$ of size $f(v_i)$ and for each edge $v_iv_j \in E(G)$, there is a vertex of $V_i$ adjacent to an exclusive vertex of $V_j$.
In this work, among the results, we present two characterizations of the expanded-clique graphs, one of them leads to a linear-time recognition algorithm.
Regarding the domination number, 
we show that this problem is \NP-complete for planar bipartite $3$-expanded-clique graphs and for cubic line graphs of bipartite graphs.
\end{abstract}

\section{Introduction} \label{sec:int}

We consider finite, simple and undirected graphs.
The degree, the open and the closed neighborhoods of a vertex $v$ are denoted by $d(v), N(v)$ and $N[v]$, respectively.
In this text, whenever we refer to a graph by $G$, we will denote its vertex set by $\{v_1, \ldots, v_n\}$.
Given a function $f : V(G) \rightarrow \mathbb{N}$ such that $f(v_i) \ge d(v_i)$ for every $v_i \in V(G)$, the {\em expanded-clique graph} $H$ of $(G,f)$ is defined as follows: for every $v_i \in V(G)$, there is a set $V_i \subseteq V(H)$ with $f(v_i)$ vertices forming a clique; and for every $v_iv_j \in E(G)$, there are 
$v_{i,j} \in V_i$ and $v_{j,i} \in V_j$ such that $v_{i,j}v_{j,i}$ in $E(H)$.
If $f(v_i) - d_G(v_i) = k_i > 0$, then $V_i$ has $k_i$ simplicial vertices, which are denoted by $v'_{i,1}, \ldots, v'_{i,k_i}$.
In this case, $G$ is the {\em root of $H$} under the $f$-expanded-clique operation. 
The set $V_i$ will be refered to as the {\em expanded clique $($associated with $v_i)$}.
Note that for every $v \in V_i$, $d(v) \in \{|V_i|-1, |V_i|\}$.
We say that $H$ is an {\em expanded-clique graph} if $H$ is the expanded-clique graph for some graph $G$ and function $f$.
If $f(v_i) = k$ for some $k \in \mathbb{N}$, then we can say that $H$ is a {\em $k$-expanded-clique} graph. 

In this work, we are interested in the complexity aspects of the recognition problem of the expanded-clique graphs and of the domination problem on this class. We remark that the well-studied inflated graphs~\cite{FAVARON1998,HENNING2012164,KANG2004485} form a subclass of the clique-expanded graphs, since a graph $H$ is {\em inflated} if it is the expanded-clique graph of a pair $(G,f)$ satisfying $f(v_i) = d(v_i)$ for every $v_i \in V(G)$.

Recall that $D \subseteq V(G)$ is a \textit{dominating set} of a graph $G$ if every vertex of $V(G) \setminus D$ has a neighbor in $D$.
The domination problem is a classic problem in graphs having many relevant applications~\cite{book1998}. It can be stated as:

\bigskip
\noindent {\sc Dominating set} \\
\begin{tabular}{lp{14cm}}
	Input: & A graph $G$ and a positive integer $\ell$. \\
	Question: & Is there a dominating set $S\subseteq V(G)$ so that $|S|\leq \ell$?
\end{tabular}
\bigskip

The text is organized as follows.
In Section~\ref{ClasseAndRec},
we begin by showing that the subdivided-line graphs~\cite{Hasunuma} and the Sierpi\'nski graphs~\cite{klavzar2002} are proper subclasses of expanded-clique graphs, and that this class is a proper subclass of the line graphs of bipartite graphs~\cite{Harary74}.
Next, we present two characterizations of the expanded-clique graphs, one of them leading to a linear-time recognition algorithm.
Section~\ref{SecDomination} deals with the domination problem.
We show that this problem is $\NP$-complete for 
planar bipartite $3$-expanded-clique graphs and for
cubic line graphs of bipartite graphs.
We also show that given an expanded-clique $H$, the domination number of $H$ plus the 2-independence number of the root $G$ of $H$ is equal to $|V(G)|$.
As a consequence of this result, we derive lower and upper bounds for the dominating number of expanded-clique graphs, which lead to the fact that a dominating set of $H$ can be easily found within the ratio $1+\frac{1}{\Delta(G)}$ of the minimum.

We conclude this section by presenting useful notation.
Consider $G$ a graph. The minimum and maximum degrees of $G$ are denoted by $\delta(G)$ and $\Delta(G)$, respectively. We say that $G$ is a \textit{cubic graph} if $G$ is $3$-regular.
We write $K_n$ for the complete graph with $n$ vertices.
The subgraph of $G$ \textit{induced} by $V'$ is denoted by $G[V']$.
A vertex $v\in V$ is a \textit{simplicial vertex} if $N(v)$ induces a clique.
For the case of $V'\subseteq V$, denote the \textit{closed neighborhood of $V'$ in $G$} as $N[V']=\{v\in N[v']:\mbox{ for all } v'\in V'\}$, and \textit{open neighborhood of $V'$ in $G$} as $N(V') = N[V']\setminus V'$.
We write $K_{p,q}$ for the complete bipartite graph where independent sets have respectively the sizes $p$ and $q$. 
A \textit{claw} is a $K_{1,3}$ graph.
A \textit{diamond} is an induced cycle $C_4$ plus one chord.
An \textit{odd-hole} is any induced cycle $C_q$ where $q$ is an odd number greater than 4.
The {\em butterfly graph}, also called hourglass graph, is the graph depicted in Figure~\ref{CDHB}. 

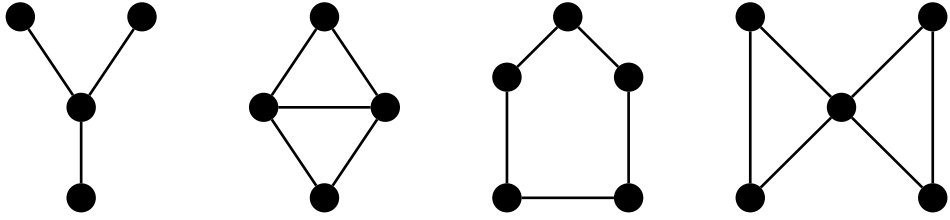
\begin{figure}[!htb]
    
    \vspace{10pt}
    \centering
    
    \begin{tikzpicture}[scale=0.4]
        \pgfsetlinewidth{1pt}
        
        \tikzset{
            vertex/.style={circle,  draw, minimum size=10pt, inner sep=0pt}}
 
        \begin{scope}
               
            \node [vertex,fill=black] (u1) at (0,3){};
            \node [vertex,fill=black] (u2) at (4,3){};
            \node [vertex,fill=black] (u3) at (2,0){};
            \node [vertex,fill=black] (u4) at (2,-3){};

            \draw[-] (u1) to (u3);
            \draw[-] (u2) to (u3);     
            \draw[-] (u3) to (u4);

      \end{scope}

  \begin{scope}[shift={(8,0)}]
               
            \node [vertex,fill=black] (v1) at (2,3){};
            \node [vertex,fill=black] (v2) at (0,0){};
            \node [vertex,fill=black] (v3) at (4,0){};
            \node [vertex,fill=black] (v4) at (2,-3){};

            \draw[-] (v1) to (v2);
            \draw[-] (v1) to (v3);     
            \draw[-] (v2) to (v3);
            \draw[-] (v2) to (v4);
            \draw[-] (v3) to (v4);

      \end{scope}

  \begin{scope}[shift={(16,0)}]
               
            \node [vertex,fill=black] (v1) at (2,3){};
            \node [vertex,fill=black] (v2) at (0,1){};
            \node [vertex,fill=black] (v3) at (0,-3){};
            \node [vertex,fill=black] (v4) at (4,-3){};
            \node [vertex,fill=black] (v5) at (4,1){};    
 
            \draw[-] (v1) to (v2);
            \draw[-] (v2) to (v3);     
            \draw[-] (v3) to (v4);
            \draw[-] (v4) to (v5);
            \draw[-] (v1) to (v5);
            
      \end{scope}

   \begin{scope}[shift={(24,0)}]
               
            \node [vertex,fill=black] (v1) at (0,3){};
            \node [vertex,fill=black] (v2) at (0,-3){};
            \node [vertex,fill=black] (v3) at (3,0){};
            \node [vertex,fill=black] (v4) at (6,3){};
            \node [vertex,fill=black] (v5) at (6,-3){};    
 
            \draw[-] (v1) to (v2);
            \draw[-] (v1) to (v3);     
            \draw[-] (v2) to (v3);
            \draw[-] (v3) to (v4);
            \draw[-] (v3) to (v5);
            \draw[-] (v4) to (v5);
            
      \end{scope}
    \end{tikzpicture}
    \label{CDHB}
\caption{Claw, Diamond, Odd-hole and Butterfly graph}
\end{figure}

\section{Characterization and recognition}\label{ClasseAndRec}

The \textit{subdivision} of a graph $F$, $S(F)$, is the replacement of every edge $uv \in E(F)$ for a new vertex $x_{uv}$ and edges $x_{uv}u$ and $x_{uv}v$.
The \textit{line graph of $F$}, written $L(F)$, is the graph whose vertex set is $E(F)$ and in which two distinct vertices $uv$ and $xy$ are adjacent if and only if they are adjacent in $F$, i.e., $\{u, v\} \cap \{x, y\} \ne \varnothing$.
We say that $L(S(F))$ is a {\em subdivided-line graph}~\cite{Hasunuma}. 
See an example in Figure~\ref{fig:LinhaSubBi}.


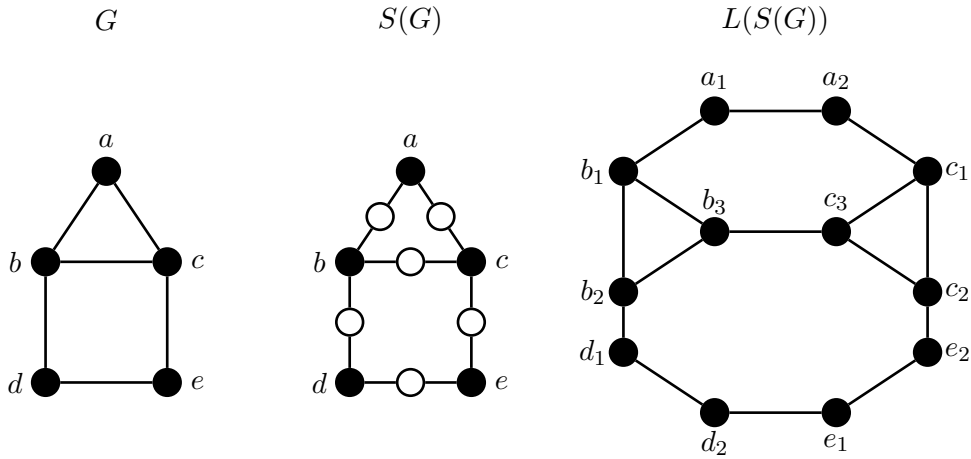
\begin{figure}[!htb]
    
    \vspace{10pt}
    \centering
    
    \begin{tikzpicture}[scale=0.4]
        \pgfsetlinewidth{1pt}
        
        \tikzset{
            vertex/.style={circle,  draw, minimum size=10pt, inner sep=0pt}}

  \begin{scope}

\draw[] (1*2,9) node {$G$};
\draw[] (2,5) node {$a$};
\draw[] (-1, 1) node {$b$};
\draw[] (5, 1) node {$c$};
\draw[] (-1, -3) node {$d$};
\draw[] (5, -3) node {$e$};

            \node [vertex, fill=black] (va) at (2,4){};
            \node [vertex, fill=black] (vb) at (0,1){};
            \node [vertex, fill=black] (vd) at (0,-3){};
            \node [vertex, fill=black] (ve) at (4,-3){};
            \node [vertex, fill=black] (vc) at (4,1){};    
 
            \draw[-] (va) to (vb);
            \draw[-] (vb) to (vd);     
            \draw[-] (vd) to (ve);
            \draw[-] (ve) to (vc);
            \draw[-] (va) to (vc);
            \draw[-] (vb) to (vc);
            
      \end{scope}

\begin{scope}[shift={(10,0)}]

\draw[] (1*2,9) node {$S(G)$};
\draw[] (2,5) node {$a$};
\draw[] (-1, 1) node {$b$};
\draw[] (5, 1) node {$c$};
\draw[] (-1, -3) node {$d$};
\draw[] (5, -3) node {$e$};

            \node [vertex, fill=black] (va) at (2,4){1};
            \node [vertex, fill=black] (vb) at (0,1){2};
            \node [vertex, fill=black] (vd) at (0,-3){3};
            \node [vertex, fill=black] (ve) at (4,-3){4};
            \node [vertex, fill=black] (vc) at (4,1){5}; 

            \node [vertex, fill=white] (vab) at (1,2.5){}; 
            \node [vertex] (vac) at (3,2.5){}; 
            \node [vertex] (vbc) at (2,1){}; 
            \node [vertex] (vbd) at (0,-1){};
            \node [vertex] (vde) at (2,-3){};
            \node [vertex] (vce) at (4,-1){};

            \draw[-] (va) to (vab);
            \draw[-] (vb) to (vab);
            \draw[-] (vb) to (vbc);
            \draw[-] (vc) to (vbc);
            \draw[-] (vc) to (vac);
            \draw[-] (va) to (vac);
            \draw[-] (vb) to (vbd);
            \draw[-] (vbd) to (vd);
            \draw[-] (vd) to (vde);
            \draw[-] (ve) to (vde);
            \draw[-] (ve) to (vce);
            \draw[-] (vc) to (vce);

      \end{scope}

\begin{scope}[shift={(20,0)}]

\draw[] (1*4,9) node {$L(S(G))$};
\draw[] (2,7) node {$a_1$};
\draw[] (6,7) node {$a_2$};
\draw[] (-2,4) node {$b_1$};
\draw[] (-2,0) node {$b_2$};
\draw[] (-2,-2) node {$d_1$};
\draw[] (2,-5) node {$d_2$};
\draw[] (6,-5) node {$e_1$};
\draw[] (10,-2) node {$e_2$};
\draw[] (10,0) node {$c_2$};
\draw[] (10,4) node {$c_1$};
\draw[] (2,3) node {$b_3$};
\draw[] (6,3) node {$c_3$};
                
            \node [vertex, fill=black] (va1) at (2,6){};
            \node [vertex, fill=black] (va2) at (6,6){};
            \node [vertex, fill=black] (vb3) at (2,2){};
            \node [vertex, fill=black] (vc3) at (6,2){};
            \node [vertex, fill=black] (vb1) at (-1,4){};
            \node [vertex, fill=black] (vc1) at (9,4){};
            \node [vertex, fill=black] (vb2) at (-1,0){};
            \node [vertex, fill=black] (vc2) at (9,0){};
            \node [vertex, fill=black] (vd2) at (2,-4){};
            \node [vertex, fill=black] (ve1) at (6,-4){};
            \node [vertex, fill=black] (vd1) at (-1,-2){};
            \node [vertex, fill=black] (ve2) at (9,-2){};

            \draw[-] (va1) to (va2);
            \draw[-] (vc1) to (va2); 
            \draw[-] (vc3) to (vb3);
            \draw[-] (vb3) to (vb1);
            \draw[-] (va1) to (vb1);
            \draw[-] (vc3) to (vc1);
            \draw[-] (vb1) to (vb2);
            \draw[-] (vb3) to (vb2);
            \draw[-] (vc3) to (vc2);
            \draw[-] (vc1) to (vc2);
            \draw[-] (vb2) to (vd1);
            \draw[-] (vd1) to (vd2);
            \draw[-] (ve1) to (ve2);
            \draw[-] (ve2) to (vc2);
            \draw[-] (vd2) to (ve1);

      \end{scope}

\end{tikzpicture}
\label{fig:LinhaSubBi}
\caption{An example of a line graph of subdivision.}
\end{figure}

The Sierpi\'nski graphs were introduced by Klav\v{z}ar and Milutinovi\'{c} as a generalization of the graph of the Tower of Hanoi problem~\cite{Klavzar1997}.
Given integers $p \geq 1$ and $q \geq 1$, the {\em Sierpi\'nski graph} $S(p,q)$ has a vertex for each $p$-tuple that can be formed from $\{1, \ldots, q\}$ and, for two distinct vertices $u = (u_1, u_2,\ldots, u_p)$ and $w = (w_1,w_2,\ldots,w_p)$, $uw\in E(S(p,q))$ if and only if there exists an $h \in \{1,\ldots,p\}$ such that

\begin{enumerate}
	\item $u_t = w_t$ for $t\in \{1,\ldots,h-1\}$;
	
	\item $u_h \neq w_h$;
	
	\item $u_t = w_h$ and $w_t = u_h$ for $t \in \{h + 1, p\}$.
\end{enumerate}

In Figure~\ref{fig:HanoiGrafo}, we show examples of $S(1,3)$, $S(2,3)$, and $S(3,3)$. 


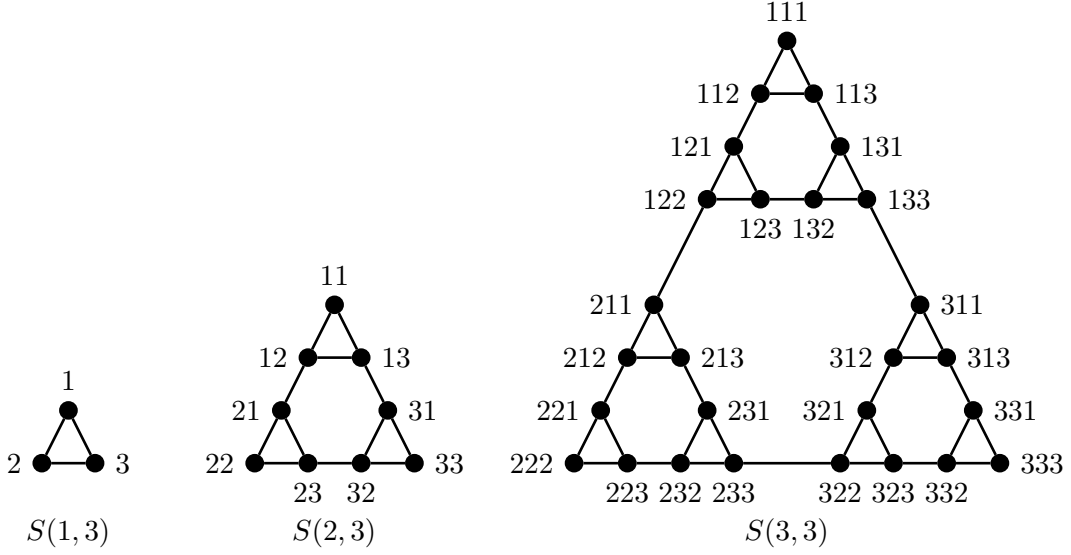
\begin{figure}[!htb]
    
    \vspace{10pt}
    \centering
    
    \begin{tikzpicture}[scale=0.35]
        \pgfsetlinewidth{1pt}
        
        \tikzset{
            vertex/.style={circle,  draw, minimum size=6pt, inner sep=0pt}}

  \begin{scope}[shift={(0,0)}]
           \node (S13) (1) at (1,-4)  [label=above:${S(1,3)}$]{};
                          
            \node [vertex,fill=black] (1) at (1,2)  [label=above:${1}$]{};
            \node [vertex,fill=black] (2) at (0,0)  [label=left:${2}$]{};
            \node [vertex,fill=black] (3) at (2,0)  [label=right:${3}$]{};

            \draw[-] (1) to (2);
            \draw[-] (1) to (3);
            \draw[-] (2) to (3);

      \end{scope}

\begin{scope}[shift={(0,4)}]

  \begin{scope}[shift={(10,0)}]
            \node (S23) (1) at (1,-8)  [label=above:${S(2,3)}$]{};
                          
            \node [vertex,fill=black] (11) at (1,2)  [label=above:${11}$]{};
            \node [vertex,fill=black] (12) at (0,0)  [label=left:${12}$]{};
            \node [vertex,fill=black] (13) at (2,0)  [label=right:${13}$]{};

            \draw[-] (11) to (12);
            \draw[-] (11) to (13);
            \draw[-] (12) to (13);

      \end{scope}

  \begin{scope}[shift={(8,-4)}]

            \node [vertex,fill=black] (21) at (1,2)  [label=left:${21}$]{};
            \node [vertex,fill=black] (22) at (0,0)  [label=left:${22}$]{};
            \node [vertex,fill=black] (23) at (2,0)  [label=below:${23}$]{};

            \draw[-] (21) to (22);
            \draw[-] (21) to (23);
            \draw[-] (22) to (23);

      \end{scope}

  \begin{scope}[shift={(12,-4)}]

            \node [vertex,fill=black] (31) at (1,2)  [label=right:${31}$]{};
            \node [vertex,fill=black] (32) at (0,0)  [label=below:${32}$]{};
            \node [vertex,fill=black] (33) at (2,0)  [label=right:${33}$]{};

            \draw[-] (31) to (32);
            \draw[-] (31) to (33);
            \draw[-] (32) to (33);

      \end{scope}

            \draw[-] (12) to (21);
            \draw[-] (23) to (32);
            \draw[-] (13) to (31);

\end{scope}

\begin{scope}[shift={(2,14)}]

    \begin{scope}[shift={(15,0)}]

          \begin{scope}[shift={(10,0)}]
            \node (S33) (1) at (1,-18)  [label=above:${S(3,3)}$]{};
                          
            \node [vertex,fill=black] (111) at (1,2)  [label=above:${111}$]{};
            \node [vertex,fill=black] (112) at (0,0)  [label=left:${112}$]{};
            \node [vertex,fill=black] (113) at (2,0)  [label=right:${113}$]{};

            \draw[-] (111) to (112);
            \draw[-] (111) to (113);
            \draw[-] (112) to (113);

      \end{scope}

  \begin{scope}[shift={(8,-4)}]

            \node [vertex,fill=black] (121) at (1,2)  [label=left:${121}$]{};
            \node [vertex,fill=black] (122) at (0,0)  [label=left:${122}$]{};
            \node [vertex,fill=black] (123) at (2,0)  [label=below:${123}$]{};

            \draw[-] (121) to (122);
            \draw[-] (121) to (123);
            \draw[-] (122) to (123);

      \end{scope}

  \begin{scope}[shift={(12,-4)}]

            \node [vertex,fill=black] (131) at (1,2)  [label=right:${131}$]{};
            \node [vertex,fill=black] (132) at (0,0)  [label=below:${132}$]{};
            \node [vertex,fill=black] (133) at (2,0)  [label=right:${133}$]{};

            \draw[-] (131) to (132);
            \draw[-] (131) to (133);
            \draw[-] (132) to (133);

      \end{scope}

            \draw[-] (112) to (121);
            \draw[-] (123) to (132);
            \draw[-] (113) to (131);

\end{scope}

 \begin{scope}[shift={(10,-10)}]

    \begin{scope}[shift={(10,0)}]

            \node [vertex,fill=black] (211) at (1,2)  [label=left:${211}$]{};
            \node [vertex,fill=black] (212) at (0,0)  [label=left:${212}$]{};
            \node [vertex,fill=black] (213) at (2,0)  [label=right:${213}$]{};

            \draw[-] (211) to (212);
            \draw[-] (211) to (213);
            \draw[-] (212) to (213);

      \end{scope}

  \begin{scope}[shift={(8,-4)}]

            \node [vertex,fill=black] (221) at (1,2)  [label=left:${221}$]{};
            \node [vertex,fill=black] (222) at (0,0)  [label=left:${222}$]{};
            \node [vertex,fill=black] (223) at (2,0)  [label=below:${223}$]{};

            \draw[-] (221) to (222);
            \draw[-] (221) to (223);
            \draw[-] (222) to (223);

      \end{scope}

  \begin{scope}[shift={(12,-4)}]

            \node [vertex,fill=black] (231) at (1,2)  [label=right:${231}$]{};
            \node [vertex,fill=black] (232) at (0,0)  [label=below:${232}$]{};
            \node [vertex,fill=black] (233) at (2,0)  [label=below:${233}$]{};

            \draw[-] (231) to (232);
            \draw[-] (231) to (233);
            \draw[-] (232) to (233);

      \end{scope}

            \draw[-] (212) to (221);
            \draw[-] (223) to (232);
            \draw[-] (213) to (231);

\end{scope}

\begin{scope}[shift={(20,-10)}]

  \begin{scope}[shift={(10,0)}]

            \node [vertex,fill=black] (311) at (1,2)  [label=right:${311}$]{};
            \node [vertex,fill=black] (312) at (0,0)  [label=left:${312}$]{};
            \node [vertex,fill=black] (313) at (2,0)  [label=right:${313}$]{};

            \draw[-] (311) to (312);
            \draw[-] (311) to (313);
            \draw[-] (312) to (313);

      \end{scope}

  \begin{scope}[shift={(8,-4)}]

            \node [vertex,fill=black] (321) at (1,2)  [label=left:${321}$]{};
            \node [vertex,fill=black] (322) at (0,0)  [label=below:${322}$]{};
            \node [vertex,fill=black] (323) at (2,0)  [label=below:${323}$]{};

            \draw[-] (321) to (322);
            \draw[-] (321) to (323);
            \draw[-] (322) to (323);

      \end{scope}

  \begin{scope}[shift={(12,-4)}]

            \node [vertex,fill=black] (331) at (1,2)  [label=right:${331}$]{};
            \node [vertex,fill=black] (332) at (0,0)  [label=below:${332}$]{};
            \node [vertex,fill=black] (333) at (2,0)  [label=right:${333}$]{};

            \draw[-] (331) to (332);
            \draw[-] (331) to (333);
            \draw[-] (332) to (333);

      \end{scope}

            \draw[-] (312) to (321);
            \draw[-] (323) to (332);
            \draw[-] (313) to (331);

\end{scope}

            \draw[-] (122) to (211);
            \draw[-] (133) to (311);
            \draw[-] (233) to (322);
 
\end{scope}
   
\end{tikzpicture}

\caption{Examples of Sierpi\'nski graphs.} \label{fig:HanoiGrafo}
\end{figure}

Observe that a subdivided-line graph has no simplicial vertices, whereas the number of simplicial vertices of a Sierpi\'nski graph is $q$, which means that they are disjoint graph classes. In the next result, we show that both are proper subclasses of the expanded-clique graphs.

\begin{proposition}
If $H$ is a subdivided-line graph or a Sierpi\'nski graph, then $H$ is expanded-clique.
\end{proposition}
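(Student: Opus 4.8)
The plan is to treat the two classes separately, in each case exhibiting an explicit pair $(G,f)$ for which the given graph $H$ is the expanded-clique graph. The guiding idea in both cases is that the cliques $V_i$ of the expanded-clique structure are already visible as the local cliques of $H$, so the work reduces to checking that the edges crossing between these cliques match the edges of a suitable root graph, exactly one edge per clique-pair, with the cross vertices pairwise distinct (the exclusive-vertex condition).

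First I would handle subdivided-line graphs. Given $H=L(S(F))$, I identify the vertices of $H$ with the flags $(u,e)$, where $u\in V(F)$, $e\in E(F)$ and $u\in e$ (each flag being the edge of $S(F)$ joining $u$ to $x_e$). A direct check of adjacency in $L(S(F))$ shows that $(u,e)$ and $(u',e')$ are adjacent exactly when $u=u'$ and $e\neq e'$, or $e=e'$ and $u\neq u'$. Consequently, for each $u$ the set $V_u=\{(u,e):e\ni u\}$ is a clique of size $d_F(u)$, and for each edge $e=uv$ the only edge between $V_u$ and $V_v$ joins $(u,e)$ to $(v,e)$. Taking $G=F$ and $f=d_F$ then shows $H$ is the expanded-clique graph of $(F,d_F)$; since $f=d_F$ no simplicial vertices arise, matching the observation that subdivided-line graphs have none.

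Next I would handle Sierpi\'nski graphs by induction on $p$, proving that $S(p,q)$ is the expanded-clique graph of $(S(p-1,q),f)$ with $f\equiv q$ for $p\geq 2$, the base case $S(1,q)=K_q$ being the expanded-clique graph of a single vertex with $f=q$. Writing $m=p-1$, I group the vertices of $S(p,q)$ by their first $m$ coordinates: for each vertex $v=(v_1,\ldots,v_m)$ of $S(m,q)$ set $V_v=\{(v_1,\ldots,v_m,t):t\in\{1,\ldots,q\}\}$. Using the witness-index definition of adjacency, I would verify: (i) each $V_v$ is a clique of size $q$, via adjacency with witness $h=p$; (ii) an edge $vw$ of $S(m,q)$ with witness $h$ lifts to the single cross edge joining $(v,w_h)\in V_v$ to $(w,v_h)\in V_w$, and this is the unique edge between $V_v$ and $V_w$ because the witness, being the first differing coordinate, is forced, whence $s=w_h$ and $s'=v_h$ for any adjacent $(v,s),(w,s')$; and (iii) the cross vertices inside a fixed $V_v$ are pairwise distinct. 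One also notes $f\equiv q\geq d(v)$ automatically, since degrees in $S(m,q)$ are $q-1$ or $q$.

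The hard part will be step (iii), the exclusive-vertex condition, because the last coordinate $w_h$ of the cross vertex equals $v_m$ for \emph{every} neighbor $w$ whose witness $h$ is smaller than $m$, so a priori these could collide. The key lemma to isolate is that a vertex $v$ has \emph{at most one} neighbor in $S(m,q)$ with witness $h<m$: if two witnesses $h_1<h_2<m$ both yielded neighbors, the constant-suffix conditions they impose would force $v_m\neq v_{h_2}$ and simultaneously $v_{h_2}=v_m$, a contradiction. Granting this, the $q-1$ within-block neighbors (witness $h=m$) occupy the $q-1$ last coordinates different from $v_m$, while the at most one neighbor with witness $h<m$ occupies the last coordinate $v_m$; hence the cross vertices are distinct, and the lone unused vertex $(v,v_m)$ appears exactly when $v$ is an extreme vertex $(a,\ldots,a)$, producing precisely the $q$ simplicial vertices of $S(p,q)$ noted earlier. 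This closes the identification and completes the proof.
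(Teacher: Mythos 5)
Your proof is correct and takes essentially the same approach as the paper: the same roots, namely $(F,d_F)$ for a subdivided-line graph $L(S(F))$ and $(S(p-1,q),f\equiv q)$ for a Sierpi\'nski graph $S(p,q)$. The only substantive difference is that you verify the exclusive-vertex condition in the Sierpi\'nski case explicitly, via the observation that a vertex has at most one neighbor whose witness index is smaller than $p-1$, a point the paper's proof passes over without detailed justification.
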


\begin{proof}
For both cases, we need to find a root $G$ and its corresponding $f$.

For a subdivided-line graph $H = L(S(G))$, take $G$ as the root and $f(v_i) = d(v_i)$ for every $v_i \in V(G)$.
Note that the vertex set of $H$ is formed by a set $V_i$ with $d(v_i)$ vertices forming a clique for every $v_i \in V(G)$. Furthermore, if $v_iv_j \in E(G)$, then the vertex $v_{ij}$ of $S(G)$, originated by the edge $v_iv_j$, becomes an edge in $H$ joining one vertex of $V_i$ to an exclusive vertex of $V_j$. Since there are no more edges, we conclude that $H$ is the expanded-clique graph of $(G,f)$.

Now, let $H$ be a Sierpi\'nski graph $S(p,q)$. Note that if $q=1$, then $H$ is the trivial graph. Then, we can assume that $q \ge 2$.
For $p = 1$, note that $H$ is the complete graph with $q$ vertices. Then, we can choose $V(G) = \{v_1\}$ and $f(v_1) = q$. 
For $p > 1$, let $G = S(p-1,q)$ and $f(v_i) = q$ for every $v_i \in V(G)$.
The expanded-clique graph of $(G,f)$ has $q |V(G)|$ vertices and this is the number of vertices of $S(p,q)$ because for any $(p-1)$-tuple that can be formed with $q$ elements, we can form $q$ tuples by adding one coordinate. The definition of Sierpi\'nski graphs and the definition of expanded-clique graphs imply that these $q$ vertices form a clique. Denote by $V_i$ such clique associated with $v_i \in V(S(p-1,q))$. We can write $v_i = (u_1, \ldots, u_{p-1})$. Note that $d(v_i) \in \{q-1,q\}$.
We know that every vertex of $S(p,q)$ and every vertex of a $q$-expanded-clique graph has also degree $q-1$ or $q$.
If $d(v_i) = q$, then since $p \ge 2$ and $q \ge 2$, the definition implies that there are two coordinates of $v_i$ that are different. Then, for every $v_j \in N(v_i)$, there is a vertex in $V_i$ and a vertex in $V_j$ that are adjacent in $S(p,q)$.
If $d(v_i) = q-1$, then all coordinates of $v_i$ are equal. Then for every $v_j \in N(v_i)$, there is a vertex in $V_i$ and a vertex in $V_j$ that are adjacent in $S(p,q)$.
Since there are no more edges in $S(p,q)$ and the edges cited above are precisely the edges of the expanded-clique graph of $(G,f)$, we conclude that $H = S(p,q)$ is an expanded-clique graph.
\end{proof}

Before showing that the line graphs of bipartite graphs form a superclass of the expanded-clique graphs, we recall an useful result.

\begin{proposition} \label{LinhaBip} {\em \cite{Harary74}}
	A graph is a line graph of a bipartite graph if and only if it is (claw,~diamond,~odd-hole)-free.
\end{proposition}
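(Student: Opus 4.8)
The plan is to prove both implications by translating the three forbidden configurations in $G$ into structural constraints on a root graph $H$ with $G=L(H)$.

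For the forward implication, assume $G=L(H)$ with $H$ bipartite. I would first recall the standard fact that every line graph is claw-free: if a vertex $e_0$ of $L(H)$ (an edge of $H$) had three pairwise non-adjacent neighbours $e_1,e_2,e_3$, then each $e_i$ meets $e_0$ in one of its two endpoints, so by pigeonhole two of them share an endpoint of $e_0$ and are therefore adjacent, a contradiction. Next I would show that a diamond or an odd-hole in $L(H)$ forces a short odd cycle in $H$. For the diamond, writing its four vertices as edges of $H$ and tracking the shared endpoints shows that they must close up a triangle in $H$; for an induced cycle $C_q$ of length $q\ge 5$, consecutive edges share endpoints while non-consecutive ones do not, so the shared endpoints are $q$ distinct vertices forming a $C_q$ in $H$. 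Since $H$ is bipartite it has no triangle and no odd cycle, so $L(H)$ contains no diamond and no odd-hole.

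For the converse, assume $G$ is (claw, diamond, odd-hole)-free and build a bipartite root through a Krausz-type clique partition. Diamond-freeness gives that the common neighbours of any edge form a clique, hence every edge of $G$ lies in a \emph{unique} maximal clique and the maximal cliques partition $E(G)$. Claw-freeness then forces each vertex to lie in at most two of these maximal cliques: three maximal cliques through a vertex $v$ would yield three neighbours of $v$ in distinct cliques which, using the uniqueness just established, must be pairwise non-adjacent, i.e.\ a claw. This is exactly a Krausz partition, so $G=L(H)$ where $H$ has one vertex per maximal clique (with a pendant added for each vertex of $G$ lying in a single clique) and one edge per vertex of $G$. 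It remains to check that this $H$ is bipartite. A triangle in $H$ would correspond to three maximal cliques of $G$ meeting pairwise in three distinct vertices, and these three vertices would form a triangle of $G$ whose edges lie in three different maximal cliques---contradicting the uniqueness of the maximal clique through each edge; hence $H$ is triangle-free. An odd cycle of length $q\ge 5$ in $H$ would, by the same edge-to-endpoint correspondence as above, produce an induced odd $C_q$ in $G=L(H)$, i.e.\ an odd-hole; hence $H$ has no long odd cycle either. Being free of all odd cycles, $H$ is bipartite.

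The main obstacle is the converse, and within it the verification that the canonically constructed root is bipartite rather than merely establishing that $G$ is some line graph. The delicate points are proving triangle-freeness of $H$ purely from diamond-freeness via the unique-maximal-clique property, and setting up the cycle correspondence cleanly so that induced cycles of length $\ge 4$ in $H$ match induced cycles of the same length in $L(H)$, while triangles (which are allowed in line graphs) are handled separately. I would also treat the degenerate cases explicitly---disconnected $G$ componentwise, and the exceptional component $K_3$, whose canonical root is the bipartite star $K_{1,3}$ rather than the triangle---so that the Whitney-type non-uniqueness of the root does not cause trouble.
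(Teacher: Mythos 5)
The paper does not prove this proposition at all: it is quoted as a known theorem of Harary and Holzmann \cite{Harary74}, so there is no in-paper argument to compare yours against. Judged on its own, your proof is correct and is essentially the classical one: forbidden-subgraph checks via the edge-to-endpoint correspondence for the forward direction, and a Krausz-type clique partition (unique maximal clique per edge from diamond-freeness, at most two cliques per vertex from claw-freeness) for the converse, followed by a verification that the canonical root has no odd cycle. All the key steps check out, including the two delicate ones you flag: the triangle-freeness of $H$ (the three pairwise intersections give a triangle of $G$, which, being a clique, would force its three edges into a single maximal clique, contradicting that they lie in three distinct ones) and the fact that any cycle of length $q\ge 4$ in $H$ --- induced or not --- yields an induced $C_q$ in $L(H)$, since adjacency of the corresponding vertices of $G$ depends only on whether the edges of $H$ share an endpoint. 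Two small points you should make explicit in a written version: in the odd-hole direction you need chordlessness of the $C_q$ in $L(H)$ to argue that the $q$ shared endpoints are pairwise distinct (otherwise two non-consecutive edges would meet and create a chord), and in the Krausz construction you should note that two distinct cliques of the partition meet in at most one vertex (else the edge between two common vertices would lie in two maximal cliques), which is what makes the vertex-of-$G$ to edge-of-$H$ map injective. Isolated vertices of $G$ also need an isolated edge in $H$, but this is routine.
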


\begin{theorem}\label{GrafoLinhaDoPro}
If $H$ is an expanded-clique graph, then $H$ is a line graph of a bipartite graph.
\end{theorem}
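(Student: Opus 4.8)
The plan is to invoke Proposition~\ref{LinhaBip} and verify that $H$ is simultaneously claw-free, diamond-free, and odd-hole-free. The single structural fact that drives all three verifications is the following: every vertex of $H$ lies in exactly one expanded clique $V_i$, and, since $d(v)\in\{|V_i|-1,|V_i|\}$ for every $v\in V_i$, it has \emph{at most one} neighbor outside $V_i$, namely the external neighbor coming from the unique edge of $E(H)$ that leaves $V_i$ through $v$. I will call an edge of $H$ \emph{internal} if both endpoints share an expanded clique and \emph{external} otherwise; the external edges form a matching meeting each $V_i$ exactly in its $d_G(v_i)$ non-simplicial vertices.

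For the claw, I would suppose $v$ has three pairwise non-adjacent neighbors. At most one of them can be the external neighbor of $v$, so at least two of them lie in the same clique $V_i$ as $v$ and are therefore adjacent, a contradiction. For the diamond, I would look at the two degree-three vertices $a,b$ (adjacent to each other and to the two non-adjacent vertices $c,d$) and split on the type of the edge $ab$. If $ab$ is internal, say $a,b\in V$, then any common neighbor $x$ of $a$ and $b$ must also lie in $V$: otherwise both $ax$ and $bx$ would be external at $x$, giving $x$ two external neighbors. Hence $c,d\in V$ and $cd$ is an edge, contradicting the diamond. If $ab$ is external, then $b$ is the only external neighbor of $a$ and vice versa, so every other neighbor of $a$ lies in $V_a$ and every other neighbor of $b$ lies in $V_b$; a common neighbor would then lie in $V_a\cap V_b=\varnothing$, so $a$ and $b$ have no common neighbor at all, which is again impossible in a diamond.

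The heart of the argument, and the step I expect to be the main obstacle, is odd-hole-freeness. Let $x_1x_2\cdots x_qx_1$ be an induced cycle with $q\ge 5$ odd. Two consecutive edges cannot both be external, since their common vertex would then have two external neighbors; and they cannot both be internal, for if $x_{i-1}x_i$ and $x_ix_{i+1}$ lay in cliques $V$ and $V'$, then $x_i\in V\cap V'$ forces $V=V'$ (each vertex being in exactly one clique), whence $x_{i-1},x_{i+1}\in V$ and $x_{i-1}x_{i+1}$ is a chord. Thus the internal/external labels must strictly alternate around the cycle, which forces $q$ to be even and contradicts $q$ odd. The delicate point I would state explicitly is why this reasoning does \emph{not} outlaw triangles: for $q=3$ the vertices $x_{i-1}$ and $x_{i+1}$ are already adjacent in the cycle, so no chord appears, and the chord contradiction only kicks in once $q\ge 4$. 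Having ruled out all three forbidden configurations, Proposition~\ref{LinhaBip} immediately yields that $H$ is the line graph of a bipartite graph. (As a sanity check, the same partition into the cliques $V_i$ together with the external $K_2$'s is a Krausz clique partition in which each vertex lies in at most two cliques, whose root graph is visibly bipartite with the $V_i$'s on one side; I would keep this in reserve in case the forbidden-subgraph bookkeeping proves awkward.)
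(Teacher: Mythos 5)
Your proposal is correct and follows essentially the same route as the paper: both verify claw-, diamond-, and odd-hole-freeness from the fact that each vertex has at most one neighbor outside its expanded clique and then invoke Proposition~\ref{LinhaBip}, with the odd-hole case handled by the same alternation-around-the-cycle argument. Your diamond case analysis is somewhat more detailed than the paper's one-line appeal to the simplicial/$1$-simplicial neighborhood structure, but the underlying idea is identical.
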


\begin{proof}[proof]
Let $H$ be an expanded-clique graph. By Proposition~\ref{LinhaBip}, it suffices to show that $H$ is (claw, diamond, odd-hole)-free. By the definition of expanded-clique graphs, for every $v \in V(H)$, either $N(v)$ is a clique or there is $u \in N(v)$ such that $N(v) \setminus \{u\}$ is a clique and $u$ has no neighbors in $N(v)$. Since a claw and a diamond have vertices not satisfying this property, we conclude that $H$ is (claw, diamond)-free. Now, let $C_q = u_1 \ldots u_q u_1$ be an induced cycle of $H$ where $q \ge 4$, and let $G$ be a root of $H$.
By the definition of expanded-clique graphs, every vertex $u_i \in V(H)$ belongs to exactly one expanded clique $V_j$, for some $v_j \in V(G)$, and has at most one neighbor outside $V_j$.
Since $C_q$ has no chords, for every vertex $u_i$ of $C_q$, one of its neighbors in $C_q$ belongs to the same expanded clique as $u_i$ and the other belongs to another expanded clique, which implies that $q$ is even and that $H$ is odd-hole-free.
\end{proof}

For $k \ge 3$, a sequence of vertices $u_1, \ldots, u_k$ of a graph $F$  is a {\em chain} if $u_iu_{i+1} \in E(F)$ for $i \in \{1, \ldots, k-1\}$, $d(u_i) = 2$ for $i \in \{2, \ldots, k-1\}$ and it is maximal with these properties. We say that a chain is {\em bad} if $k$ is odd, $d(u_1) \ge 3$ and $d(u_k) \ge 3$. If a chain is not bad, then it is {\em good}. We say that a vertex $v \in V(F)$ is {\em $1$-simplicial} if there is $u \in N(v)$ such that $N(v) \setminus \{u\}$ is a clique and $u$ has no neighbors in $N(v)$. In this case we say that $u$ is the {\em outsider} of $v$.

\begin{theorem} \label{the:charac}
A graph $H$ that is not a cycle is expanded-clique if and only if every vertex of $H$ is simplicial or $1$-simplicial and every
chain of $H$ is good.
\end{theorem}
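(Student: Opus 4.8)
The plan is to prove the two implications separately, and within each to dispatch the local ``simplicial or $1$-simplicial'' condition quickly and spend the effort on the chains.

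For the forward implication, suppose $H$ is the expanded-clique graph of a pair $(G,f)$. The local condition is already implicit in the proof of Theorem~\ref{GrafoLinhaDoPro}: every vertex lies in exactly one expanded clique $V_i$ and has at most one neighbour outside $V_i$, namely the endpoint of the cross edge coming from an edge of $G$; a vertex with no such cross edge is simplicial, and a vertex with one cross edge is $1$-simplicial with that neighbour as outsider. To control the chains I would label each edge of $H$ as a \emph{clique edge} (both ends in one $V_i$) or a \emph{cross edge} (ends in different expanded cliques). The mechanism I would exploit is that a degree-$2$ internal vertex of a chain that is $1$-simplicial lies in an expanded clique of size $2$, so it is incident to exactly one clique edge and one cross edge; hence along such a chain the two edge types alternate. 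If the chain were bad, each of its endpoints would have degree at least $3$ and therefore lie in an expanded clique of size at least $3$, which forces both boundary edges $u_1u_2$ and $u_{k-1}u_k$ to be cross edges. Alternation with cross edges at both ends makes the number of edges $k-1$ odd, i.e.\ $k$ even, contradicting $k$ odd. Thus no bad chain can occur.

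For the reverse implication I would reconstruct a root from $H$. The first step is to recover the expanded cliques. A cross edge cannot lie in a triangle, since a common neighbour of its two endpoints would give some vertex two cross edges, violating the $1$-simplicial condition; consequently every edge contained in a triangle must be a clique edge, and using the hypothesis one checks that the maximal cliques of size at least $3$ are pairwise vertex-disjoint and are forced to be the larger expanded cliques. The edges left over are triangle-free and are exactly the material of the chains. Along each chain I would assign the clique/cross labels by the alternating pattern above, anchored at the ends: an endpoint of degree at least $3$ forces its incident chain edge to be a cross edge, whereas a degree-$1$ end leaves that choice free. The point is that goodness of the chain is precisely the parity condition under which a consistent alternating assignment exists. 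Once every edge is labelled, I would take one vertex of $G$ per expanded clique, an edge of $G$ for each cross edge between two cliques, and $f(V_i)=|V_i|$; since each vertex carries at most one cross edge, at most $|V_i|$ cross edges leave $V_i$, giving $f(V_i)\ge d_G(V_i)$, and one verifies directly that the expanded-clique graph of this $(G,f)$ is $H$.

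I expect the reverse direction to be the crux, and within it the genuinely delicate point is global consistency of the clique/cross labelling along the chains together with the verification that the reconstructed $G$ is a simple graph satisfying $f\ge d_G$. This is exactly where good chains are needed: a bad chain would demand cross edges at both ends with the wrong parity, so no alternating labelling survives and the construction breaks down. I would also handle separately the degenerate pieces---isolated cliques and the cyclic configurations excluded by the hypothesis that $H$ is not a cycle---so that the decomposition into cliques and chains is exhaustive.
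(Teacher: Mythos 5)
Your proposal follows essentially the same route as the paper's proof: the forward direction combines the observation that every vertex has at most one neighbour outside its expanded clique with a parity argument along chains, and the reverse direction reconstructs the expanded cliques greedily (the maximal cliques around the degree-$\ge 3$ vertices, plus consecutive pairs of degree-$2$ vertices along the chains), with goodness of the chains supplying exactly the parity needed for a consistent pairing; your clique-edge/cross-edge labelling is just a reformulation of the paper's assignment of the sets $S(v)$ and its pairing $\{u_i,u_{i+1}\}$ of internal chain vertices. The only point worth flagging is one you share with the paper: both parity arguments tacitly assume that the two boundary edges of a bad chain are cross edges (equivalently, that an internal degree-$2$ chain vertex lies in an expanded clique of size $2$), which fails when a $K_3$ expanded clique carries cross edges at two of its three vertices, since there $u_1,u_2,u_3$ is formally a bad chain inside a genuine expanded-clique graph --- so your attempt is neither weaker nor stronger than the published argument on this degenerate case.
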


\begin{proof}
We can assume that $H$ is a non-trivial connected graph.
We begin by considering the case where $H$ is a path $P_k$ for $k \ge 2$.
Since, in this case, every vertex of $H$ is simplicial or 1-simplicial and $H$ does not have bad chains, we have to show a pair $(G,f)$ such that $G$ is the root of $H$ under ther $f$-expanded-clique operation. 
For $k$ even, we choose $G = P_{\frac{k}{2}} = v_1 \ldots v_{\frac{k}{2}}$ and set $f(v_i) = 2$ if $i \in \{1,\ldots,\frac{k}{2}\}$; and for $k$ odd, we choose $G = P_{\lceil\frac{k}{2}\rceil} = v_1 \ldots v_{\lceil\frac{k}{2}\rceil}$ and set $f(v_i) = 2$ if
$i \in \{1,\ldots,\lceil\frac{k}{2} \rceil -1\}$ and
$f(v_{\lceil\frac{k}{2} \rceil}) = 1$.
From now on, we can assume that $H$ is neither a cycle nor a path.

For the necessity, consider that $H$ is the expanded-clique of a pair $(G,f)$.
First, suppose by contradiction that $u_1, \ldots, u_k$ is a bad chain of $H$.
Since $d(u_1) \ge 3$ and $d(u_2) = 2$, we conclude that $u_1$ and $u_2$ belong to different expanded cliques of $H$. By symmetry, $u_{k-1}$ and $u_k$ belong to different expanded cliques of $H$.
Furthermore, the expanded clique containing $u_2$ also contains $u_3$ and no more vertices. 
Then, $\{u_i,u_{i+1}\}$ is an expanded clique for every $i < k$ such that $i$ is even.
Therefore, $u_{k-1}$ and $u_k$ belong to the same expanded clique of $H$, which is a contradiction. Therefore, any chain of $H$ is good.

Now, suppose by contradiction that there is $v \in V(H)$ such that $N(v)$ is not simplicial neither $1$-simplicial. Note that $d(v) \ge 3$.
Then, there are $u,w \in N(v)$ such that $uw \not\in E(H)$ and let $x \in N(v) \setminus \{u,w\}$.
If $u,w \in N_H(x)$, then $H$ has a diamond, which is not possible by Proposition~\ref{LinhaBip} and Theorem~\ref{GrafoLinhaDoPro}.
If $ux, xw \not\in E(H)$, then $H$ has a claw, which is also not possible.
Therefore, every vertex of $N(v) \setminus \{u,w\}$ is adjacent to exactly one vertex of $\{u,w\}$.
Hence, the subgraph of $H$ induced by $N(v)$ has exactly two connected components, say $C_1$ and $C_2$, each one being a complete graph.
Since $v$ is not 1-simplicial, we have that $|V(C_i)| \ge 2$ for $i \in \{1,2\}$.
Without loss of generality, we can assume that the expanded clique containing $v$ is $\{v\} \cup V(C_1)$. Now, we reach a contradiction because $v$ has at least 2 neighbors outside 
the expanded clique containing it, which is not possible in an expanded-clique graph.

For the sufficiency, consider that every chain of $H$ is good and for every $v \in V(H)$, either $v$ is a simplicial or a $1$-simplicial vertex. We will construct a graph $G$ and a function $f$ such that $H$ is the expanded-clique graph of $(G,f)$. In order to do this, we define $S(v)$ for every $v \in V(H)$ as follows.

First, consider the vertices $v$ with degree at least $3$ in any order. If $v$ is a simplicial vertex, then set $S(v) = N(v)$; otherwise, let $u \in N(v)$ such that $N(v) \setminus \{u\}$ is a clique and $u$ has no neighbors in $N(v)$, and set $S(v) = N(v) \setminus \{u\}$.
For the vertices $v$ with degree $2$ in any exists, choose anyone having a neighbor $w$ such that $S(w)$ has already been defined. Denote by $u$ the neighbor of $v$ different of $w$. Then, define $S(v) = \{v,u\}$ and $S(u) = \{v,u\}$ and repeat. Finally, if there are pendant vertices $v$ such that $S(v)$ has not been defined yet, define $S(v) = \{v\}$. The assumptions on $H$ guarantee that $S(v)$ will be defined for every vertex $v \in V(H)$.

Now, consider any ordering $u_1, \ldots, u_p$ of $V(H)$. Then, for $i$ from $1$ to $p$, add to $G$ a vertex $v_i$ if $S(u_i) \ne S(u_j)$ for every $j \in [i-1]$ and set $f(v_i) = |S(u_i)|$.
Finally, add to $G$ the edge $v_iv_j$ if there is an edge joining some vertex of $S(u_i)$ to some vertex of $S(u_j)$. Noting that $S(u_i)$ is a clique for every $u_i \in V(H)$ and that every vertex of $S(u_i)$ has at most one neighbor outside $S(u_i)$, we conclude that $H$ is the expanded-clique graph of $(G,f)$.
\end{proof}

It is clear that Theorem~\ref{the:charac} leads to a polynomial-time algorithm for answering whether a graph $H$ that is not a cycle is expanded-clique. We will present in the sequel a linear-time algorithm for this problem. For completeness, our algorithm considers also the case where $H$ is a cycle. For this purpose, we need of the following result.

\begin{proposition} \label{pro:C4}
If $H$ is an expanded-clique graph, then $H$ is $C_4$-free.
\end{proposition}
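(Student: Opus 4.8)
The plan is to argue by contradiction, assuming $H$ contains an induced $C_4$, and to derive a contradiction purely from the local structure of expanded-clique graphs already isolated in the proof of Theorem~\ref{GrafoLinhaDoPro}. Two facts do all the work. First, every vertex of $H$ belongs to exactly one expanded clique $V_i$; two vertices in the same expanded clique are adjacent (since $V_i$ is a clique), while each vertex has at most one neighbor outside its own clique. Second, for any two distinct expanded cliques $V_i$ and $V_j$ there is \emph{at most one} edge of $H$ joining them, because the edges of $H$ are precisely the intra-clique edges together with one cross edge for each edge $v_iv_j$ of the simple root $G$. I would state these two facts explicitly at the start, as they are the only ingredients needed.

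So suppose $u_1u_2u_3u_4u_1$ is an induced $C_4$ of $H$, so that $u_1u_3$ and $u_2u_4$ are non-edges. For each $i$ let $V(u_i)$ denote the expanded clique containing $u_i$. Since non-adjacent vertices cannot share a clique, $V(u_1)\neq V(u_3)$ and $V(u_2)\neq V(u_4)$. I call a cycle edge $u_iu_{i+1}$ \emph{internal} if $u_i$ and $u_{i+1}$ lie in the same expanded clique, and \emph{external} otherwise.

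The crux is a short matching argument on the four cycle edges. First I would observe that no two incident cycle edges can both be internal: if $u_{i-1}u_i$ and $u_iu_{i+1}$ were both internal, then $u_{i-1}$ and $u_{i+1}$ would lie in the same clique and hence be adjacent, contradicting that the two diagonals of the $C_4$ are non-edges. On the other hand, since each $u_i$ has at most one neighbor outside $V(u_i)$ but has two neighbors on the cycle, at least one of the two cycle edges at $u_i$ must be internal. Combining the two observations, the internal edges form a set of pairwise non-incident edges covering all four vertices, i.e.\ a perfect matching of the $4$-cycle; up to the symmetry of the cycle this forces the internal edges to be exactly $\{u_1u_2,\,u_3u_4\}$ (the case $\{u_2u_3,\,u_4u_1\}$ being symmetric).

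In this configuration I would set $A:=V(u_1)=V(u_2)$ and $B:=V(u_3)=V(u_4)$; these cliques are distinct because $u_1u_3$ is a non-edge. The two remaining cycle edges $u_2u_3$ and $u_4u_1$ are then both external edges joining $A$ to $B$, and they are distinct since their endpoint sets differ. This contradicts the fact that at most one edge of $H$ joins two distinct expanded cliques, and the proof is complete. I do not expect any deep obstacle here: the argument is elementary once the two structural facts are in place, and the only point requiring care is that the transitivity observation (two incident internal edges forcing an illegal diagonal edge) is genuinely what forbids an induced $C_4$, even though $C_4$ itself is a line graph of a bipartite graph and so is \emph{not} excluded by Proposition~\ref{LinhaBip}.
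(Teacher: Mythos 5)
Your proof is correct and follows essentially the same route as the paper's: both use the partition of $V(H)$ into expanded cliques and the fact that each vertex has at most one neighbor outside its clique to force the $C_4$ to split into two adjacent pairs lying in two distinct cliques, yielding two cross edges between those cliques and hence a multi-edge in the simple root $G$. Your version just makes the case analysis (the perfect-matching structure of the internal edges) more explicit than the paper's brief ``without loss of generality'' argument.
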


\begin{proof}
Let $H$ be the expanded-clique graph of a pair $(G,f)$. Suppose by contradiction that $H$ contains an induced $u_1 u_2 u_3 u_4 u_1$. By the definition, the expanded cliques of $H$ form a partition of $V(H)$. Consider the expanded clique $V_i$ containing $u_1$. We also know that $u_1$ has at most one neighbor outside $V_i$. Without loss of generality, we can assume that $u_4 \in V_i$ and that $u_2$ belongs to an expanded clique $V_j$ different of $V_i$. The same reasoning implies that $u_3 \in V_j$. With these facts, we conclude that there are two edges in $G$ joining the vertices $v_i,v_j \in V(G)$ associated with $V_i$ and $V_j$, which contradicts the assumption that $G$ is a simple graph.
\end{proof}

\newcounter{number_of_lines}

\begin{algorithm}[H] \label{alg:expanded-clique}
	\caption{{\sc Is$\_$expanded$\_$clique}}
	
	\KwIn{A graph $H$ of order $n \ge 3$ that is not a cycle (each adjacency lists is ordered by the vertex number).}
	\KwOut{The root $G$ of $H$ if one there exists}

\If{$H$ is a cycle $C_n$}{ \label{lin:cycle_begin}
    \lIf{$n==3$}{\Return $((\{v_1\},\{\}),3)$}
    \lIf{$n==4$ {\em {\bf or}} $n \ge 5$ odd}
        {\Return {\sc no}}
    \lIf{$n \ge 6$ even}
        {\Return $(C_{\frac{n}{2}},2)$} \label{lin:cycle_end}
}

sort the adjacency lists \label{lin:sort}

\For{$u_i \in H.V$ \label{lin:unmarked_begin}}{
    $u_i.marked =$ {\sc false} \label{lin:unmarked}\\
    $u_i.outsider =$ {\sc null} \\
    $u_i.current = u_i.first\_neighbor$\label{lin:unmarked_end}
}
\For{$u_i \in H.V$}{ \label{lin:main}    
    \If{$u_i.marked ==$ {\sc false} \label{lin:marked_false}}{
        $u_i.marked =$ {\sc true} \label{lin:marked1} \\
        \If{$u_i.deg \ge 3$}{
            \lIf{{\em \ref{pro:s1s}}$(H,u_i) ==$ {\sc false} \label{lin:s1s}} 
            {\Return {\sc no}}            
        }
        \Else{
            \lIf{{\em \ref{pro:chain}}$(H,u_i) ==$ {\sc false} \label{lin:chain}} 
            {\Return {\sc no}}
        }
    }
}
\Return $(G,f)$ \label{lin:end}

\setcounter{number_of_lines}{\value{AlgoLine}}	

\end{algorithm}

\begin{procedure}[H]
\caption{IsSimp1Simp($H,u_i$) \label{pro:s1s}}

\addtocounter{algocf}{-1}	
\setcounter{AlgoLine}{\value{number_of_lines}}

\If{$u_i.outsider = $ {\sc null}}{
    Let $w_1, w_2, w_3$ be the first, the second and the third neighbors of $u_i$ \label{lin:outsider_12_begin}

    \If{$w_1w_2 \not\in E(H)$}{
        \lIf{$w_1w_3 \not\in E(H)$}{ $u_i.outsider = w_1$}
        
        \lElse{$u_i.outsider = w_2$ \label{lin:outsider_12_end}}
    }
    \Else{ \label{lin:outsider_3_begin}
        \For{$w \in u_i.Adj$}{
            \If{$w \ne w_1$ {\bf and} $ww_1 \not\in E(H)$ }{
                $u_i.outsider = w$ \\
                $w.outsider = u_i$ \label{lin:outsider_3_end}
            }
        }
    }          
}
\For{$w \in u_i.Adj$}{ \label{lin:clique_begin}
    \If{$w \ne u_i.outsider$}{
        \For{$z \in u_i.Adj$}{
            \If{$z \ne w$ {\bf and} $z \ne u_i.outsider$} {
                \If{$z.current \ne w$ \label{lin:current}}{
                    \lIf{$z.outsider \ne$ {\sc null} \label{lin:second}}{\Return {\sc false}}
                    $z.outsider = z.current$ \label{lin:outsiderz} \\
                    $z.current.outsider = z$ \label{lin:outsiderz'} \\
                }
                $z.current = z.current.next$ \\ \label{lin:clique_end} 
            }
        }
    }
}
\For{$w \in u.Adj$}{ \label{lin:extra_begin}
    \If{$w \ne u.outsider$}{
        $w.marked = $ {\sc true} \label{lin:marked2}\\
        \If{$w.current \ne $ {\sc null}}{
            \If{$w.outsider ==$ {\sc null}}{
                $w.outsider = w.current$\\
                $w.current.outsider = w$\\
                \lIf{$w.current.next \ne $ {\sc null}}{\Return {\sc false}}
            }
            \lElse{\Return {\sc false}} \label{lin:extra_end}
        }
    }
}

\setcounter{number_of_lines}{\value{AlgoLine}}	

\end{procedure}

\begin{procedure}[H]
\caption{IsGoodChain($H,u$) \label{pro:chain}}

\addtocounter{algocf}{-1}	
\setcounter{AlgoLine}{\value{number_of_lines}}

\If{$u.deg == 1$}{
    $W = $ singleton formed by the only neighbor of $u$\\ \label{lin:W1}
    good$\_$chain = {\sc true} \label{lin:degree1}
}
\Else{
    $W = $ set formed by the two neighbors of $u$\\ \label{lin:W2}
    good$\_$chain = {\sc false}
}
$q = 1$ \\
\For{$w \in W$}{ \label{lin:find_chain_begin}
    \While{{\sc true}} {
        $v = $ neighbor of $w$ different of $u$ \\
        \lIf{$v.deg \ge 3$}{ {\bf break}}
        $q = q+1$ \\
        $v.marked = $ {\sc true} \label{lin:marked_chain}\\
        \If{$v.deg == 1$}{
            good$\_$chain = {\sc true} \\
            break
        }
        $u = w$ \\
        $w = v$ \label{lin:find_chain_end}
    }
}
\lIf{good$\_$chain {\em {\bf or}} $q$ is even}{\Return {\sc true}} \label{lin:good}
\Return {\sc false} \label{lin:bad}

\end{procedure}

\begin{theorem}
Algorithm~$\ref{alg:expanded-clique}$ is correct.
\end{theorem}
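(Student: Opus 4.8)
The plan is to reduce the correctness of Algorithm~\ref{alg:expanded-clique} to the characterization of Theorem~\ref{the:charac}, treating the cycle and non-cycle cases separately, and to establish both soundness (any returned pair is a genuine root) and completeness (the algorithm rejects precisely the non-expanded-clique inputs). For the cycle branch (lines~\ref{lin:cycle_begin}--\ref{lin:cycle_end}) I would verify each case by hand: $C_3 = K_3$ is the expanded-clique of a single vertex with $f = 3$; $C_4$ is rejected, which is justified by Proposition~\ref{pro:C4}; any $C_n$ with $n \ge 5$ odd is an odd-hole and hence rejected by Theorem~\ref{GrafoLinhaDoPro}; and an even cycle $C_n$ with $n \ge 6$ is exactly the $2$-expanded-clique of $C_{n/2}$, so returning $(C_{n/2},2)$ is correct. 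Each branch is a finite check that either exhibits an explicit root or invokes an impossibility result.

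For the main part I would assume $H$ is not a cycle (the algorithm's precondition), so that by Theorem~\ref{the:charac} the graph $H$ is expanded-clique if and only if every vertex is simplicial or $1$-simplicial and every chain is good. The proof then splits into two lemmas matching the two procedures. First, I would show that \ref{pro:chain}$(H,u)$ returns \textsc{true} exactly when the maximal chain through the degree-$\le 2$ vertex $u$ is good: the procedure walks both directions of the chain, accumulating its length $q$, marking interior vertices (line~\ref{lin:marked_chain}), and detecting a degree-$1$ endpoint; the final test at line~\ref{lin:good} is then a direct transcription of the definition of a good chain, so correctness here follows by a short induction on the traversal. Paths, having all degrees $\le 2$, are handled entirely by this procedure and are accepted because they possess a degree-$1$ endpoint.

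The substance of the argument lies in \ref{pro:s1s}. I would prove that, upon termination over all vertices, the \emph{outsider} fields encode a partition of $V(H)$ into the expanded cliques of the construction in Theorem~\ref{the:charac}: each vertex is assigned at most one outsider, and deleting the outsider edge leaves its neighborhood a clique. The invariant to maintain is that the monotone \textsc{current} pointer, sweeping the sorted adjacency list, checks for each pivot $u_i$ that $N(u_i)$ minus its outsider induces a clique, returning \textsc{false} the instant two non-adjacent neighbors are found that cannot both be absorbed as outsiders (lines~\ref{lin:current}--\ref{lin:second}). I would argue that the assignments at lines~\ref{lin:outsiderz}--\ref{lin:outsiderz'} and in the closing loop (lines~\ref{lin:extra_begin}--\ref{lin:extra_end}) are mutually consistent: a vertex obtains its outsider either during its own pass or during the pass of an adjacent pivot, and any second, conflicting assignment triggers rejection. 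Once this invariant is in place, the implicit sets $S(v)$ recorded by the outsider pointers coincide with those built in the proof of Theorem~\ref{the:charac}, whence the returned $(G,f)$ is a valid root and membership is decided correctly.

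The hard part will be the bookkeeping of the \textsc{current} and \textsc{outsider} fields across separate invocations of \ref{pro:s1s}. Because a vertex may be touched both as the pivot $u_i$ and as a neighbor of another pivot, I must show that the pointer never advances past a genuine clique edge, that marking a neighbor (line~\ref{lin:marked2}, line~\ref{lin:marked_chain}) never causes its own defining condition to go unverified, and that each vertex's simplicial-or-$1$-simplicial test is performed exactly once. Pinning down this single-assignment, monotone-advance invariant, which is simultaneously what makes the run linear, is where the proof needs the most care.
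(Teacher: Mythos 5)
Your proposal follows essentially the same route as the paper: the identical case analysis for cycles (explicit roots for $C_3$ and even $C_n$, rejection of $C_4$ via Proposition~\ref{pro:C4} and of odd cycles via Proposition~\ref{LinhaBip} and Theorem~\ref{GrafoLinhaDoPro}), followed by a reduction to the characterization of Theorem~\ref{the:charac} with separate correctness claims for the two procedures, tracking the \emph{outsider}/\emph{current} bookkeeping exactly as the paper does. The invariant you flag as the hard part is precisely what the paper's proof verifies, so no further comparison is needed.
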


\begin{proof}
First, consider that the input graph $H$ is a cycle $C_n$ for $n \ge 3$.
It is clear that $C_3$ is the 3-expanded-clique graph of the trivial graph.
By Proposition~\ref{pro:C4}, we know that $C_4$ is not an expanded-clique graph.
Proposition~\ref{LinhaBip} and Theorem~\ref{GrafoLinhaDoPro} imply that $C_n$ is not expanded-clique graph por $n \ge 5$ odd.
For $n \ge 6$ even, it suffices to note that $C_n$ is the expanded-clique graph of the pair $(C_{\frac{n}{2}},2)$. These cases are considered in lines~\ref{lin:cycle_begin} to~\ref{lin:cycle_end}.

Consider now that $H$ is connected and is not a cycle. By the characterization given in Theorem~\ref{the:charac}, we have to show that the algorithm returns a pair $(G,f)$ where $H$ is the $f$-expanded-clique of $G$ if and only if
every vertex of $H$ is simplicial or 1-simplicial and every chain of $H$ is good. In the for loop beginning in line~\ref{lin:unmarked_begin}, the algorithm set initial values for the variables associated with every vertex $u_i$ of $H$. and $current$. When the $marked$ variable becomes {\sc true}, the algorithm already knows to which expanded clique $u_i$ belongs to, which prevents that an expanded clique be discovered more than once.
The $outsider$ variable will register the outsider of $u_i$ if the algorithm reaches out that $u_i$ is a $1$-simplicial vertex with degree at least $3$.
The auxiliary variable $current$ keeps the current neighbor of $u_i$ during a search in its neighborhood.
Next, in the for loop beginning in line~\ref{lin:main}, the algorithm passes through every vertex of $H$ testing whether it is simplicial or 1-simplicial (line~\ref{lin:s1s}) if its degree is at least $3$. Otherwise, it verifies whether the chain containing it is good (line~\ref{lin:chain}). Therefore, it suffices to show that $(i)$~\ref{pro:s1s}($H,u_i$) returns {\sc false} if and only if $u_i$ is neither simplicial nor 1-simplicial in the case where $u_i.deg \ge 3$, and that $(ii)$~\ref{pro:chain}($H,u_i$) returns {\sc false} if and only if the chain containing $u_i$ is not good in the case where $u_i.deg \le 2$.

\bigskip \noindent $(i)$ From lines~\ref{lin:outsider_12_begin} to~\ref{lin:outsider_3_end}, the algorithm finds out the outsider of vertex $u_i$ if one there exists. Recall that the outsider of $u_i$ is the only neighbor of $u_i$ that is not adjacent to any other neighbor of $u_i$. Then, if the first two neighbors of $u_i$ are not adjacent, then one of them is the outsider of $u_i$. To discover which one is the outsider, it suffices to test whether for any of them if it is adjacent to the third neighbor of $u_i$. These tests are done in lines~\ref{lin:outsider_12_begin} to~\ref{lin:outsider_12_end}. On the other hand, if the first two neighbors of $u_i$ are not adjacent, then none of them is the outsider of $u_i$. Then, it suffices to passes through the adjacency list of $u_i$ testing whether there is some neighbor of $u_i$ non-adjacent to the first neighbor of $u_i$. This verification is done in lines~\ref{lin:outsider_3_begin} to~\ref{lin:outsider_3_end}.

From lines~\ref{lin:clique_begin} to~\ref{lin:clique_end}, the algorithm checks whether any neighbor of $u_i$ different of its outsider is also neighbor of every other neighbor of $u_i$, i.e., if these vertices form a clique. Since the adjacency lists are ordered, we can pass through the adjacency lists of the neighbors of $u_i$ simultaneously.
However, a neighbor $z$ of $u_i$ can be a $1$-simplicial vertex, in which case $z$ has a neighbor $z'$ not in the neighborhood of $u_i$. This possibility is considered in line~\ref{lin:current}. If this occurs, then such $z'$ is the outsider of $z$ and vice-versa and this information is saved in lines~\ref{lin:outsiderz} and~\ref{lin:outsiderz'}.
But if this occurs more than once, then we know that $z$ is neither simplicial nor 1-simplicial, which is tested in line~\ref{lin:second}.

If $u_i$ is a simplicial or a $1$-simplicial vertex, then any neighbor of $u_i$ has $d(u_i)-1, d(u_i)$, or $d(u_i)+1$ neighbors. 
When the algorithm reaches line~\ref{lin:extra_begin}, we have checked that the neighbors of $u_i$ different of its outsider form a clique $C$. Each such vertex can have at most one neighbor outside $C$. We have finished to search the neighborhood of $u_i$, but we have to complete the search at the adjacency list of each neighbor of $u_i$ to guarantee that it does not have more vertices. If it has one neighbor and its outsider is undefined yet, then such vertex is it outsider. If it has 2 or more neighbors, then $H$ is not clique-expanded. These test are done in lines~\ref{lin:extra_begin} to~\ref{lin:extra_end} completing the proof of $(i)$.

\bigskip \noindent $(ii)$ Denote by $u_1, \ldots, u_k$ the chain $C$ containing $u_i$. If $i = 1$ and $d(u_1)= 1$, then $C$ is good by the definition. If this is the case, this information is saved in line~\ref{lin:degree1}.
We use the set $W$ to save the set of neighbors of $u_i$ in line~\ref{lin:W1} or~\ref{lin:W2}. The variable $q$ is used to count the number of internal vertices of the chain.
From lines~\ref{lin:find_chain_begin} to~\ref{lin:find_chain_end}, the algorithm finds out all vertices belonging to such chain. It accomplish this by choosing one neighbor $w$ of $u_i$ and following the path beginning in $u_i$ and containing $w$ until that some vertex with degree different from 2 is found. Then, it repeat the same for the other neighbor of $u_i$ if it exists. 
If the chain is good, the algorithm answers this in line~\ref{lin:good}. Otherwise, it returns bad in line~\ref{lin:bad}.
It is clear that this procedure returns {\sc false} if and only if the chain containing $u_i$ is not good in the case where $u_i.deg \le 2$
\end{proof}

\begin{theorem}
For a graph $H$ of order $n$ and size $m$, Algorithm~$\ref{alg:expanded-clique}$ finishes in $O(n+m)$ steps.
\end{theorem}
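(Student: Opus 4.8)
The plan is to split the execution of Algorithm~\ref{alg:expanded-clique} into four phases and bound each one separately, relying throughout on two ingredients: an aggregate (amortized) accounting that charges the work of the clique checks to the edges of $H$, and the invariant that the $marked$ flags force every expanded clique and every chain to be examined exactly once. Writing $d_i = d(u_i)$, the whole argument rests on the identity $\sum_i d_i = 2m$ together with the bound $\sum_j |V_j|^2 = O(n+m)$, which I justify below.

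First I would dispose of the cheap phases. Testing whether $H$ is a cycle and returning the corresponding answer (lines~\ref{lin:cycle_begin}--\ref{lin:cycle_end}) only requires scanning the degree sequence and a single traversal, hence $O(n+m)$. The initialization loop (lines~\ref{lin:unmarked_begin}--\ref{lin:unmarked_end}) performs $O(1)$ work per vertex, so $O(n)$. The sorting in line~\ref{lin:sort} is carried out globally by bucket sort: scanning the vertices in increasing label order and appending each vertex to the lists of its neighbors produces all sorted adjacency lists in $O(n+m)$ total time. The outer control of the main loop (line~\ref{lin:main}) visits each vertex once and does $O(1)$ work outside the procedure calls, contributing $O(n)$; it then remains to bound the total cost of the calls to Procedure~\ref{pro:s1s} and Procedure~\ref{pro:chain}.

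The core of the proof is the analysis of Procedure~\ref{pro:s1s}. The key structural observation is that, since the flag of $u_i$ is set before the call and the flags of all of its clique neighbors are set inside the procedure (line~\ref{lin:marked2}), each expanded clique $V_j$ of the candidate root triggers \emph{exactly one} call of Procedure~\ref{pro:s1s}, on the first of its vertices reached by the main loop. I would then show that a single such call costs $O(d_i^2)$: the outsider-detection block (lines~\ref{lin:outsider_12_begin}--\ref{lin:outsider_3_end}) amounts to a bounded number of adjacency tests plus one scan of $N(u_i)$, each test realizable by a single merge of the sorted lists and hence $O(d_i)$ in total; the double loop checking the clique (lines~\ref{lin:clique_begin}--\ref{lin:clique_end}) runs over $N(u_i)\times N(u_i)$; and the closing scan (lines~\ref{lin:extra_begin}--\ref{lin:extra_end}) is again $O(d_i)$. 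Crucially, each pointer $z.current$ is initialized once and advanced monotonically without ever being reset, so that across the whole run it traverses the list of $z$ at most once; this keeps every merge well defined (there is no overrun before a {\sc false} is returned) and confirms that the double loop performs $\Theta(d_i^2)$ elementary steps and no more. Because the expanded cliques are pairwise edge-disjoint and the internal edges of $V_j$ number $\binom{|V_j|}{2}$, while the center satisfies $d_i \le |V_j|$, I charge the $O(d_i^2)$ cost of the unique call on $V_j$ to those edges. Using $\sum_j \binom{|V_j|}{2} \le m$ and $\sum_j |V_j| = n$, I obtain $\sum_j |V_j|^2 \le 2m + n$, so all calls to Procedure~\ref{pro:s1s} cost $O(n+m)$ in aggregate; early termination on a {\sc no} instance only lowers this count.

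Finally, the calls to Procedure~\ref{pro:chain} are handled by a disjointness argument: every internal (degree-$2$) vertex of a chain is marked when it is traversed (line~\ref{lin:marked_chain}), so the main loop never re-enters a chain, and each chain is walked a constant number of times at a cost proportional to its length. As the chains are internally vertex-disjoint, their total length is $O(n)$, so all calls to Procedure~\ref{pro:chain} together cost $O(n+m)$. Summing the four phases yields the claimed $O(n+m)$ bound. I expect the main obstacle to be precisely the amortized analysis of Procedure~\ref{pro:s1s}: one must argue simultaneously that each clique is processed once, that the persistent $current$ pointers make the nested merge well defined without inflating its cost, and that the seemingly quadratic $\Theta(d_i^2)$ per-call work is absorbed by the $\Theta(|V_j|^2)$ edges internal to the clique, so that the procedure is in fact linear in $n+m$ overall.
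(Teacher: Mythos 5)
Your proposal is correct and follows essentially the same route as the paper: bucket-sorting the adjacency lists in $O(n+m)$, using the $marked$ flags to argue that each expanded clique and each chain triggers exactly one procedure call, bounding a call to Procedure~\ref{pro:s1s} by $O(d(u_i)^2)$ and amortizing that against the $\Theta(|V_j|^2)$ edges inside the corresponding clique, and charging the chain walks to their degree-$2$ internal vertices. Your explicit justification of $\sum_j |V_j|^2 \le 2m+n$ and your remark on the monotone $current$ pointers are slightly more detailed than the paper's wording, but the argument is the same.
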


\begin{proof}
The case where $H$ is a cycle is considered in lines~\ref{lin:cycle_begin} to~\ref{lin:cycle_end} and it finishes in constant time. From now on we assume that $H$ is a connected graph different of a cycle. Consider that $H$ is represented by adjacency lists and each vertex is associated with an exclusive number from 1 to $n$.

We begin by showing that the sort of the adjacency lists can be done in linear time (line~\ref{lin:sort}). Consider an array $N$ of size $n$. Each position of $N$ is a linked list initially empty. For $i$ from $1$ to $n$, we go through the adjacency list of every vertex $u_i \in V(H)$. For each $w_j \in N(u_i)$, add $u_i$ to the linked list $N[j]$. This is done in $O(n+m)$ steps for all vertices of $H$. Observe that for every $i \in \{1, \dots, n$, $N[i]$ is exaclty the adjacency list of vertex $u_i$ and they appear in ascending order.

Next, it is clear that the for loop of lines~\ref{lin:unmarked_begin} to~\ref{lin:unmarked_end} costs $O(n)$ steps. Since the for loop beginning in line~\ref{lin:main} has $O(n)$ iterations, we have to show that the number of steps of the functions~\ref{pro:s1s} and~\ref{pro:chain} have time complexity $O(n+m)$ over all calls of these procedures.

First, note that the variable $marked$ is set to {\sc false} only once (line~\ref{lin:unmarked}).
Note also that a call to~\ref{pro:s1s}$(H,u_i)$ makes that the variable $marked$ of $u_i$ and of every vertex belonging to the same expanded clique as $u_i$ are set to {\sc true} in lines~\ref{lin:marked1} and~\ref{lin:marked2}, respectively. Therefore, line~\ref{lin:marked_false} guarantees that at most once call to~\ref{pro:s1s} occurs for each expanded clique of $H$. Because of line~\ref{lin:marked_chain}, we can also conclude that at most once call to~\ref{pro:chain} occurs for each chain of $H$. 

Now, note that the cost of a call to~\ref{pro:s1s}$(H,u_i)$ is $O(d(u_i)^2)$. Since the sum of the degrees of the vertices belonging to the expanded clique containing $u_i$ is $O(d(u_i)^2)$, the total complexity of all calls to~\ref{pro:s1s} is $O(m)$.
Analogously, note that the cost of a call to~\ref{pro:chain}$(H,u_i)$ is $O(q)$ where $q$ is the size of the chain containing $q$. Since every internal vertex of a chain has degree 2, the total complexity of all calls to~\ref{pro:chain} is $O(m)$. Therefore, the total time complexity of Algorithm~\ref{alg:expanded-clique} is $O(n+m)$.
\end{proof}

We conclude this section by presenting another characterization of expanded-clique graphs.

\begin{corollary}
A graph is expanded-clique if and only if it is (bad chain,~butterfly,~claw,~$C_4$, diamond,~odd-hole)-free.
\end{corollary}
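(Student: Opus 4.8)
The plan is to prove both implications by leaning on the structural characterization of Theorem~\ref{the:charac} and translating its two hypotheses into forbidden-subgraph terms. The bridge is a pair of equivalences: first, a graph has no bad chain if and only if every one of its chains is good (immediate from the definitions of good and bad chain); second, a graph is (claw, diamond, butterfly)-free if and only if each of its vertices is simplicial or $1$-simplicial. Granting these, Theorem~\ref{the:charac} settles every graph that is not a cycle, and the finitely many relevant cycles are dispatched by hand. Throughout I would assume $H$ connected, since every forbidden configuration is connected and the expanded-clique property respects disjoint unions.

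For the forward implication, assume $H$ is expanded-clique. By Theorem~\ref{GrafoLinhaDoPro} and Proposition~\ref{LinhaBip}, $H$ is (claw, diamond, odd-hole)-free, and by Proposition~\ref{pro:C4} it is $C_4$-free. The definition of expanded-clique graph already yields that every vertex is simplicial or $1$-simplicial (this is exactly what is extracted in the proof of Theorem~\ref{GrafoLinhaDoPro}); since the center of a butterfly is neither, $H$ is butterfly-free. Finally, if $H$ is not a cycle then Theorem~\ref{the:charac} gives that every chain is good, hence $H$ has no bad chain; and if $H$ is a cycle it has maximum degree $2$ and therefore trivially contains no bad chain and no butterfly. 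Thus $H$ avoids all six configurations.

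For the reverse implication, I would first dispose of the cycle case: if $H = C_n$ is free of the six forbidden subgraphs, then $C_4$-freeness rules out $n = 4$ and odd-hole-freeness rules out odd $n \ge 5$, leaving $C_3$ and the even cycles with $n \ge 6$, all of which are expanded-clique (the $f$-expansions of the trivial graph and of $C_{n/2}$ with $f \equiv 2$). If $H$ is not a cycle, I would verify the two hypotheses of Theorem~\ref{the:charac}. Bad-chain-freeness is exactly ``every chain is good''. For the vertex condition, the key step is to show that (claw, diamond, butterfly)-freeness forces every vertex to be simplicial or $1$-simplicial. Vertices of degree at most $2$ are automatically simplicial or $1$-simplicial, so suppose $v$ has $d(v) \ge 3$ and is neither; then $N(v)$ is not a clique, so it contains a non-adjacent pair $u, w$. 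Claw- and diamond-freeness then force every other neighbour of $v$ to be adjacent to exactly one of $u, w$ and force $H[N(v)]$ to be a disjoint union of cliques, so $H[N(v)]$ splits into exactly two cliques $C_1 \ni u$ and $C_2 \ni w$ with no edges between them. Since $v$ is not $1$-simplicial, neither clique is a singleton, so choosing two vertices from each and adjoining $v$ produces an induced butterfly, a contradiction. With both hypotheses established, Theorem~\ref{the:charac} yields that $H$ is expanded-clique.

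The main obstacle is this vertex equivalence in the reverse direction, namely extracting an induced butterfly from a vertex that is neither simplicial nor $1$-simplicial; this is precisely where butterfly has to enter the list, since claw and diamond alone only guarantee the two-clique structure of $N(v)$ but not that both cliques are large. It is worth recording that $C_4$ and odd-hole are genuinely needed only to exclude the bad cycles $C_4$ and $C_{2k+1}$: for graphs that are not cycles, $C_4$-freeness and odd-hole-freeness follow from the remaining conditions through Theorem~\ref{the:charac}, Proposition~\ref{pro:C4}, and Theorem~\ref{GrafoLinhaDoPro}.
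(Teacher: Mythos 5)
Your proposal is correct and follows essentially the same route as the paper: forward via Theorem~\ref{GrafoLinhaDoPro}, Proposition~\ref{LinhaBip}, Proposition~\ref{pro:C4}, and the observation that a butterfly's center is neither simplicial nor $1$-simplicial; reverse via the cycle case plus the claw/diamond analysis of $N(v)$ splitting into two cliques, with non-$1$-simpliciality forcing both cliques to be nontrivial and hence yielding an induced butterfly. Your handling of the cycle case is in fact slightly more careful than the paper's (which omits mentioning $C_3$), but the argument is the same.
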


\begin{proof}
Let $H$ be an expanded-clique graph.
Due to Theorem~\ref{GrafoLinhaDoPro}, we know that $H$ is a line graph of a bipartite graph. Then, Proposition~\ref{LinhaBip} implies that $H$ is (claw, diamond, odd-hole)-free.
By Theorem~\ref{the:charac}, $H$ is bad chain free. Since a butterfly has a vertex that is neither simplicial nor 1-simplicial, Theorem~\ref{the:charac} also implies that $H$ is butterfly free.
Since Proposition~\ref{pro:C4} guarantees that $H$ is $C_4$ free, $H$ is (bad chain,~butterfly,~claw,~$C_4$,~diamond,~odd-hole)-free.
	
Conversely, let $H$ be a graph that is (bad chain,~butterfly,~claw,~$C_4$, diamond,~odd-hole)-free.
If $H$ is a cycle $C_k$, then $k$ is even greater then $4$. It is clear that $C_{2k'}$ for $k' \ge 3$ is an expanded-clique graph. Then, consider that $H$ is not a cycle. Assume by contradiction that $H$ is not an expanded-clique graph.
As $H$ does not contain no bad chains, by Theorem~\ref{the:charac}, $H$ contains some vertex $v$ that is neither simplicial nor 1-simplicial.

If there are $u_1,u_2,u_3$ in $N(v)$ such that $\{u_1,u_2,u_3\}$ is an independent set, then we have a contradiction because $H$ has a claw.
Since $v$ is not simplicial, there are $u_1,u_2 \in N(v)$ such that $u_1u_2 \not\in E(H)$. If some neighbor of $v$ is adjacent to both $u_1,u_2$, we would have a diamond.
Hence, the subgraph of $H$ induced by $N(v)$ has exactly two connected components, say $C_1$ and $C_2$, each one being a complete graph.
Since $v$ is not 1-simplicial, we have that $|V(C_i)| \ge 2$ for $i \in \{1,2\}$. Then, choose $u_1,u_2 \in V(C_1)$
and $u_3,u_4 \in V(C_2)$. These 4 vertices plus $v$ form a butterfly, which is a contradiction.
\end{proof}

\section{The domination problem}\label{SecDomination} 

In this section, we deal with the {\sc Dominating set} problem for $k$-expanded-clique graphs $H$. For $k = 2$, the root $G$ is a path or cycle and can be easily verified that if $|V(G)| \ge 4$, then $\gamma(H) = \lceil\frac{n}{3}\rceil$. For $k=3$, the problem becomes hard as we will see in the sequel.

The {\sc Edge dominating set} problem asks, for a graph $G$ and an integer $\ell$, whether there is a set $E' \subseteq E(G)$ so that $|E'| \leq \ell$ and every edge of $E(G) \setminus E'$ is adjacent to some edge of $E'$.
It is known that the {\sc Edge dominating set} problem is $\NP$-complete for bipartite graphs with maximum degree~$3$~\cite{Yannakakis80}, which means that {\sc Dominating set} is $\NP$-complete for line graphs of bipartite graphs with maximum degree~$4$~(${\cal C}_1$).
It is also known~\cite{ZVERVICH:1995,KRATOCHVIL1994} that the {\sc Dominating set} problem is $\NP$-complete for planar bipartite graphs with maximum degree~$3$ and girth at least $k$ for a fixed $k$~(${\cal C}_2$) and for cubic graphs~(${\cal C}_3$).
Up to our best knowledge, for no proper subclass of these three classes, the {\sc Dominating set} problem is known to be $\NP$-complete.

We show in Theorem~\ref{NPcompletudeDom} that {\sc Dominating set} is \NP-complete for planar bipartite $3$-expanded-clique graphs, which by Theorem~\ref{GrafoLinhaDoPro} is a proper subclass of classes ${\cal C}_1$ and ${\cal C}_2$; and in Theorem~\ref{NPDomLBC} for cubic line graphs of bipartite graphs, a proper subclass of class ${\cal C}_3$.
The proofs of these two results are very similar and since both reduction are done from two variations of the {\sc Dominating set} problem, we begin by presenting a general reduction so that next we complete each proof with the necessary details.

Given a graph $G$, we denote by $\gamma(G)$ the size of a minimum dominating set of $G$. 

\begin{reduction}
Consider an instance $\left\langle G, \ell \right\rangle$ of {\sc Dominating set}. Let $G'$ be the 3-expanded-clique of $G$, and let $H$ be the 3-expanded-clique of $G'$. Set $\ell'= 2|V(G)|+\ell$. Then, $G$ has a dominating set with at most $\ell$ vertices if and only if $H$ has a dominating set with at most $2|V(G)|+\ell$ vertices.
\end{reduction}

\begin{proof}
The pair $\left\langle H, \ell' \right\rangle$ is an instance of {\sc Dominating set} where $H$ is the 3-expanded-clique graph of $G'$. See Figure~\ref{Fig:GrafoReducaoDom} for an example.

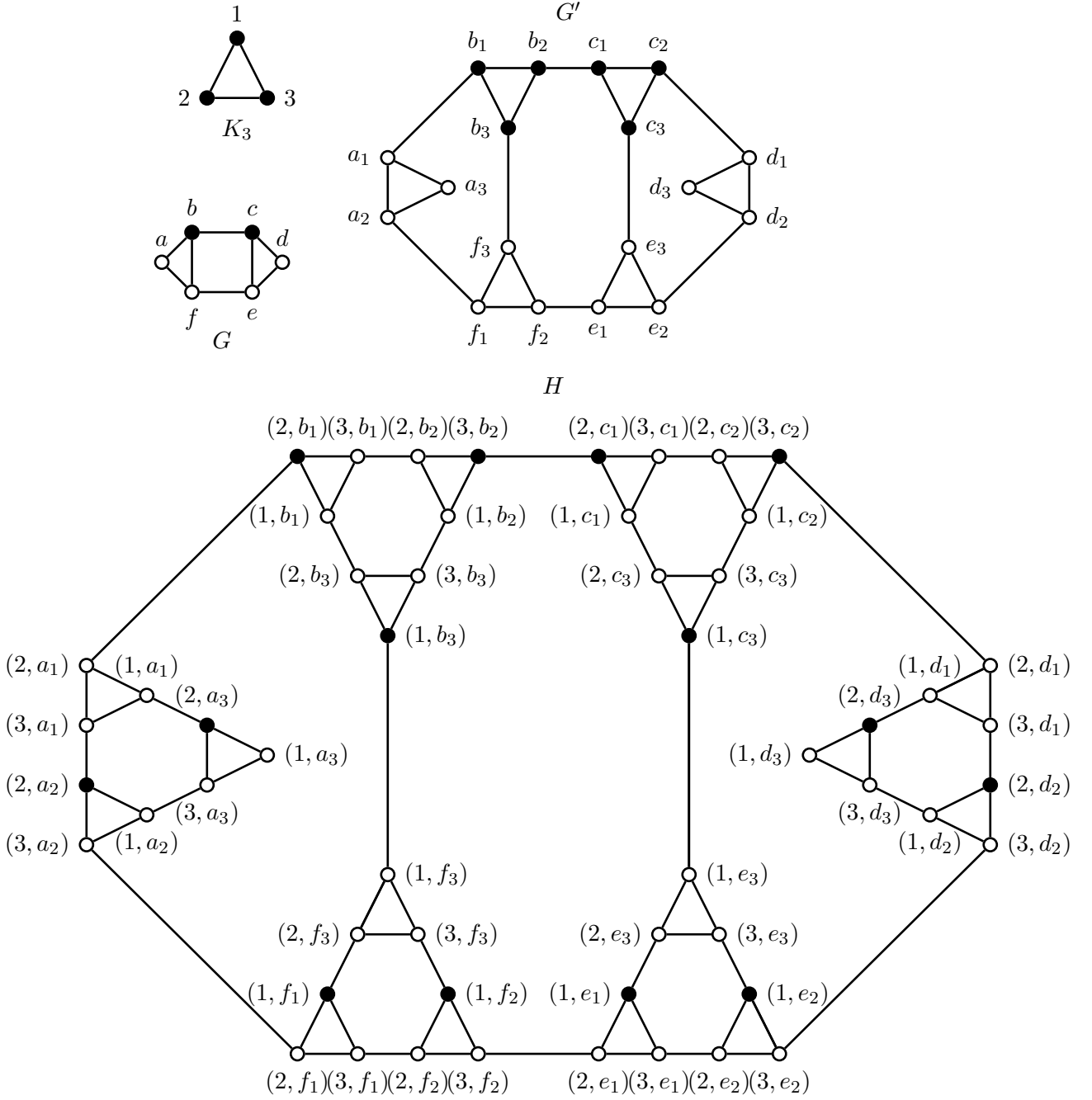
\begin{figure}[!htb]
    
    \vspace{10pt}
    \centering
    
    \begin{tikzpicture}[scale=0.24]
        \pgfsetlinewidth{1pt}
        
        \tikzset{
            vertex/.style={circle,  draw, minimum size=6pt, inner sep=0pt}}
 
 \begin{scope}[shift={(-4,16)}]

            \node [] (G) at (2,-4)  [label=above:${K_3}$]{};           
            \node [vertex, fill=black] (1) at (2,4)  [label=above:${1}$]{};
            \node [vertex, fill=black] (2) at (0,0)  [label=left:${2}$]{};
            \node [vertex, fill=black] (3) at (4,0)  [label=right:${3}$]{};                
 
            \draw[-] (1) to (2);
            \draw[-] (1) to (3);
            \draw[-] (2) to (3);            
      \end{scope}
      
\begin{scope}[shift={(-5,3)}] 

                \node [] (G) at (2,-5)  [label=above:${G}$]{};             
                \node [vertex,fill=black] (b) at (0,4)  [label=above:${b}$]{};
                \node [vertex] (f) at (0,0)  [label=below:${f}$]{};
                \node [vertex] (e) at (4,0)  [label=below:${e}$]{};
                \node [vertex,fill=black] (c) at (4,4)  [label=above:${c}$]{};
                \node [vertex] (a) at (-2,2)  [label=above:${a}$]{};
                \node [vertex] (d) at (6,2)  [label=above:${d}$]{};

                \draw[-] (b) to (c);
                \draw[-] (b) to (f);
                \draw[-] (f) to (e);
                \draw[-] (e) to (c);
                \draw[-] (a) to (b);
                \draw[-] (a) to (f);
                \draw[-] (c) to (d);
                \draw[-] (e) to (d);            
\end{scope}

\begin{scope}[shift={(14,6)}] 

            \node [] (G) at (6,14)  [label=above:${G'}$]{};             
            \node [vertex,fill=black] (b1) at (0,12)  [label=above:${b_1}$]{};
            \node [vertex,fill=black] (b2) at (4,12)  [label=above:${b_2}$]{};
            \node [vertex,fill=black] (b3) at (2,8)  [label=left:${b_3}$]{};
            \node [vertex] (f1) at (0,-4)  [label=below:${f_1}$]{};
            \node [vertex] (f2) at (4,-4)  [label=below:${f_2}$]{};
            \node [vertex] (f3) at (2,0)  [label=left:${f_3}$]{};
            \node [vertex] (e1) at (8,-4)  [label=below:${e_1}$]{};
            \node [vertex] (e2) at (12,-4)  [label=below:${e_2}$]{};
            \node [vertex] (e3) at (10,0)  [label=right:${e_3}$]{};
            \node [vertex,fill=black] (c1) at (8,12)  [label=above:${c_1}$]{};
            \node [vertex,fill=black] (c2) at (12,12)  [label=above:${c_2}$]{};
            \node [vertex,fill=black] (c3) at (10,8)  [label=right:${c_3}$]{};
            \node [vertex] (a1) at (-6,6)  [label=left:${a_1}$]{};
            \node [vertex] (a2) at (-6,2)  [label=left:${a_2}$]{};
            \node [vertex] (a3) at (-2,4)  [label=right:${a_3}$]{};
            \node [vertex] (d1) at (18,6)  [label=right:${d_1}$]{};
            \node [vertex] (d2) at (18,2)  [label=right:${d_2}$]{};
            \node [vertex] (d3) at (14,4)  [label=left:${d_3}$]{};

                \draw[-] (a1) to (a2);
                \draw[-] (b1) to (b2);
                \draw[-] (c1) to (c2);
                \draw[-] (d1) to (d2);
                \draw[-] (e1) to (e2);
                \draw[-] (f1) to (f2);
                \draw[-] (a1) to (b1);
                \draw[-] (c1) to (b2);
                \draw[-] (c2) to (d1);
                \draw[-] (e2) to (d2);
                \draw[-] (e1) to (f2);
                \draw[-] (f1) to (a2);
                \draw[-] (a1) to (a3);
                \draw[-] (a2) to (a3);
                \draw[-] (b2) to (b3);
                \draw[-] (b1) to (b3);
                \draw[-] (c1) to (c3);
                \draw[-] (c2) to (c3);
                \draw[-] (f3) to (f1);
                \draw[-] (f2) to (f3);
                \draw[-] (b3) to (f3);
                \draw[-] (d3) to (d1);
                \draw[-] (d3) to (d2);
                \draw[-] (e3) to (e1);
                \draw[-] (e3) to (e2);
                \draw[-] (c3) to (e3);

      \end{scope}

\begin{scope}[shift={(10,-16)}] 

            \node [vertex] (2a1) at (-22,-6)  [label=left:${(2, a_1)}$]{}; 
            \node [vertex] (3a1) at (-22,-10)  [label=left:${(3, a_1)}$]{}; 
            \node [vertex, fill=black] (2a2) at (-22,-14)  [label=left:${(2, a_2)}$]{};  
            \node [vertex] (3a2) at (-22,-18)  [label=left:${(3, a_2)}$]{};
            \node [vertex] (1a1) at (-18,-8)  [label=above:${(1, a_1)}$]{};
            \node [vertex] (1a2) at (-18,-16)  [label=below:${(1, a_2)}$]{};
            \node [vertex, fill=black] (2a3) at (-14,-10)  [label=above:${(2, a_3)}$]{};
            \node [vertex] (3a3) at (-14,-14)  [label=below:${(3, a_3)}$]{};
            \node [vertex] (1a3) at (-10,-12)  [label=right:${(1, a_3)}$]{};

                \draw[-] (2a1) to (1a1);
                \draw[-] (3a1) to (1a1);
                \draw[-] (2a1) to (3a1);
                \draw[-] (3a1) to (2a2);
                \draw[-] (2a2) to (3a2);
                \draw[-] (2a2) to (1a2);
                \draw[-] (3a2) to (1a2);
                \draw[-] (1a1) to (2a3);
                \draw[-] (1a2) to (3a3);
                \draw[-] (2a3) to (3a3);
                \draw[-] (2a3) to (1a3);
                \draw[-] (3a3) to (1a3);
\end{scope}

\begin{scope}[shift={(26,-40)}, rotate=180] 

            \node [vertex] (3d2) at (-22,-6)  [label=right:${(3, d_2)}$]{}; 
            \node [vertex, fill=black] (2d2) at (-22,-10)  [label=right:${(2, d_2)}$]{}; 
            \node [vertex] (3d1) at (-22,-14)  [label=right:${(3, d_1)}$]{};  
            \node [vertex] (2d1) at (-22,-18)  [label=right:${(2, d_1)}$]{};
            \node [vertex] (1d2) at (-18,-8)  [label=below:${(1, d_2)}$]{};
            \node [vertex] (1d1) at (-18,-16)  [label=above:${(1, d_1)}$]{};
            \node [vertex] (3d3) at (-14,-10)  [label=below:${(3, d_3)}$]{};
            \node [vertex, fill=black] (2d3) at (-14,-14)  [label=above:${(2, d_3)}$]{};
            \node [vertex] (1d3) at (-10,-12)  [label=left:${(1, d_3)}$]{};

                \draw[-] (3d2) to (1d2);
                \draw[-] (2d2) to (1d2);
                \draw[-] (3d2) to (2d2);
                \draw[-] (2d2) to (3d1);
                \draw[-] (3d1) to (2d1);
                \draw[-] (3d1) to (1d1);
                \draw[-] (2d1) to (1d1);
                \draw[-] (2d3) to (1d1);
                \draw[-] (3d3) to (1d2);
                \draw[-] (2d3) to (1d3);
                \draw[-] (2d1) to (1d1);
                \draw[-] (3d3) to (1d3);
                \draw[-] (2d3) to (3d3);

\end{scope}

\begin{scope}[shift={(16,-26)},rotate=90] 

            \node [vertex] (2e1) at (-22,-6)  [label=below:${(2, e_1)}$]{}; 
            \node [vertex] (3e1) at (-22,-10)  [label=below:${(3, e_1)}$]{}; 
            \node [vertex] (2e2) at (-22,-14)  [label=below:${(2, e_2)}$]{};  
            \node [vertex] (3e2) at (-22,-18)  [label=below:${(3, e_2)}$]{};
            \node [vertex, fill=black] (1e1) at (-18,-8)  [label=left:${(1, e_1)}$]{};
            \node [vertex, fill=black] (1e2) at (-18,-16)  [label=right:${(1, e_2)}$]{};
            \node [vertex] (2e3) at (-14,-10)  [label=left:${(2, e_3)}$]{};
            \node [vertex] (3e3) at (-14,-14)  [label=right:${(3, e_3)}$]{};
            \node [vertex] (1e3) at (-10,-12)  [label=right:${(1, e_3)}$]{};

                \draw[-] (1e3) to (2e3);
                \draw[-] (1e3) to (3e3);
                \draw[-] (2e3) to (1e1);
                \draw[-] (1e1) to (2e1);
                \draw[-] (2e1) to (3e1);
                \draw[-] (3e1) to (2e2);
                \draw[-] (3e2) to (2e2);
                \draw[-] (3e2) to (1e2);
                \draw[-] (3e2) to (1e2);
                \draw[-] (1e2) to (3e3);
                \draw[-] (2e3) to (3e3);
                \draw[-] (1e1) to (3e1);
                \draw[-] (1e2) to (2e2);
\end{scope}

\begin{scope}[shift={(-4,-26)},rotate=90] 

            \node [vertex] (2f1) at (-22,-6)  [label=below:${(2, f_1)}$]{}; 
            \node [vertex] (3f1) at (-22,-10)  [label=below:${(3, f_1)}$]{}; 
            \node [vertex] (2f2) at (-22,-14)  [label=below:${(2, f_2)}$]{};  
            \node [vertex] (3f2) at (-22,-18)  [label=below:${(3, f_2)}$]{};
            \node [vertex, fill=black] (1f1) at (-18,-8)  [label=left:${(1, f_1)}$]{};
            \node [vertex, fill=black] (1f2) at (-18,-16)  [label=right:${(1, f_2)}$]{};
            \node [vertex] (2f3) at (-14,-10)  [label=left:${(2, f_3)}$]{};
            \node [vertex] (3f3) at (-14,-14)  [label=right:${(3, f_3)}$]{};
            \node [vertex] (1f3) at (-10,-12)  [label=right:${(1, f_3)}$]{};

                \draw[-] (1f3) to (2f3);
                \draw[-] (1f3) to (3f3);
                \draw[-] (1f1) to (2f3);
                \draw[-] (1f1) to (2f1);
                \draw[-] (2f1) to (3f1);
                \draw[-] (3f1) to (2f2);
                \draw[-] (2f2) to (3f2);
                \draw[-] (1f3) to (2f3);
                \draw[-] (3f2) to (1f2);
                \draw[-] (1f2) to (3f3);
                \draw[-] (2f3) to (3f3);
                \draw[-] (1f1) to (3f1);
                \draw[-] (1f2) to (2f2);

\end{scope}

\begin{scope}[shift={(20,-30)},rotate=-90] 

            \node [vertex, fill=black] (3b2) at (-22,-6)  [label=above:${(3, b_2)}$]{}; 
            \node [vertex] (2b2) at (-22,-10)  [label=above:${(2, b_2)}$]{}; 
            \node [vertex] (3b1) at (-22,-14)  [label=above:${(3, b_1)}$]{};  
            \node [vertex, fill=black] (2b1) at (-22,-18)  [label=above:${(2, b_1)}$]{};
            \node [vertex] (1b2) at (-18,-8)  [label=right:${(1, b_2)}$]{};
            \node [vertex] (1b1) at (-18,-16)  [label=left:${(1, b_1)}$]{};
            \node [vertex] (3b3) at (-14,-10)  [label=right:${(3, b_3)}$]{};
            \node [vertex] (2b3) at (-14,-14)  [label=left:${(2, b_3)}$]{};
            \node [vertex, fill=black] (1b3) at (-10,-12)  [label=right:${(1, b_3)}$]{};

                \draw[-] (2b1) to (3b1);
                \draw[-] (2b2) to (3b1);
                \draw[-] (2b2) to (3b2);
                \draw[-] (2b1) to (1b1);
                \draw[-] (1b1) to (2b3);
                \draw[-] (2b3) to (1b3);
                \draw[-] (1b3) to (3b3);
                \draw[-] (3b3) to (1b2);
                \draw[-] (1b2) to (3b2);
                \draw[-] (2b3) to (3b3);
                \draw[-] (1b1) to (3b1);
                \draw[-] (2b2) to (1b2);
                
                \draw[-] (1b3) to (1f3);

\end{scope}

\begin{scope}[shift={(40,-30)},rotate=-90] 

            \node [vertex, fill=black] (3c2) at (-22,-6)  [label=above:${(3, c_2)}$]{}; 
            \node [vertex] (2c2) at (-22,-10)  [label=above:${(2, c_2)}$]{}; 
            \node [vertex] (3c1) at (-22,-14)  [label=above:${(3, c_1)}$]{};  
            \node [vertex, fill=black] (2c1) at (-22,-18)  [label=above:${(2, c_1)}$]{};
            \node [vertex] (1c2) at (-18,-8)  [label=right:${(1, c_2)}$]{};
            \node [vertex] (1c1) at (-18,-16)  [label=left:${(1, c_1)}$]{};
            \node [vertex] (3c3) at (-14,-10)  [label=right:${(3, c_3)}$]{};
            \node [vertex] (2c3) at (-14,-14)  [label=left:${(2, c_3)}$]{};
            \node [vertex, fill=black] (1c3) at (-10,-12)  [label=right:${(1, c_3)}$]{};

                \draw[-] (2c1) to (3c1);
                \draw[-] (3c1) to (2c2);
                \draw[-] (2c2) to (3c2);
                \draw[-] (2c1) to (1c1);
                \draw[-] (1c1) to (2c3);
                \draw[-] (2c3) to (1c3);
                \draw[-] (1c3) to (3c3);
                \draw[-] (3c3) to (1c2);
                \draw[-] (1c2) to (3c2);
                \draw[-] (2c3) to (3c3);
                \draw[-] (1c1) to (3c1);
                \draw[-] (2c2) to (1c2);
                
                \draw[-] (1c3) to (1e3);

\end{scope}

                \draw[-] (2b1) to (2a1);
                \draw[-] (1c3) to (1e3);
                \draw[-] (3a2) to (2f1);
                \draw[-] (3f2) to (2e1);
                \draw[-] (3e2) to (3d2);
                \draw[-] (2d1) to (3c2);
                \draw[-] (3b2) to (2c1);
                \node [] (G) at (19,-5)  [label=above:${H}$]{};

\end{tikzpicture}

\caption{Graph resulting from polynomial transformation.}
\label{Fig:GrafoReducaoDom}
\end{figure}

For $u \in V(G)$, denote the expanded clique of $G'$ associated with $u$ by $\{u_1,u_2,u_3\}$, and for $i \in [3]$ and $u_i \in V(G')$, denote the expanded clique of $H$ associated with $u_i$ by $\{u_{i,1},u_{i,2},u_{i,3}\}$. For $u \in V(G)$, denote by $H_u$ the subgraph of $H$ induced by $\{u_{i,j} : i \in [3]$ and $j \in [3]\}$. See Figure~\ref{Fig:Gadget}.

\begin{figure}[!htb]
    
    \vspace{10pt}
    \centering
    
    \begin{tikzpicture}[scale=0.3]
        \pgfsetlinewidth{1pt}
        
        \tikzset{
            vertex/.style={circle,  draw, minimum size=10pt, inner sep=0pt}}
 
\begin{scope}[shift={(-54,-26)},rotate=90] 

            \node [] (G) at (-25,-12)  [label=above:$(i)$]{};
            \node [] (G) at (-25,-30)  [label=above:$(ii)$]{};
            \node [] (G) at (-25,-48)  [label=above:$(iii)$]{};

            \node [vertex, fill=gray] (2e1) at (-22,-6) {}; 
            \node [vertex, fill=gray] (3e1) at (-22,-10) {}; 
            \node [vertex, fill=gray] (2e2) at (-22,-14)  {};  
            \node [vertex, fill=gray] (3e2) at (-22,-18)  {};
            \node [vertex, fill=black] (1e1) at (-18,-8)  {};
            \node [vertex, fill=black] (1e2) at (-18,-16)  {};
            \node [vertex, fill=gray] (2e3) at (-14,-10)  {};
            \node [vertex, fill=gray] (3e3) at (-14,-14)  {};
            \node [vertex] (1e3) at (-10,-12)  {};
            
                \draw[-] (1e3) to (2e3);
                \draw[-] (1e3) to (3e3);
                \draw[-] (2e3) to (1e1);
                \draw[-] (1e1) to (2e1);
                \draw[-] (2e1) to (3e1);
                \draw[-] (3e1) to (2e2);
                \draw[-] (3e2) to (2e2);
                \draw[-] (3e2) to (1e2);
                \draw[-] (3e2) to (1e2);
                \draw[-] (1e2) to (3e3);
                \draw[-] (2e3) to (3e3);
                \draw[-] (1e1) to (3e1);
                \draw[-] (1e2) to (2e2);
\end{scope}

\begin{scope}[shift={(-36,-26)},rotate=90] 

            \node [vertex, fill=black] (2f1) at (-22,-6)  {}; 
            \node [vertex, fill=gray] (3f1) at (-22,-10)  {}; 
            \node [vertex, fill=gray] (2f2) at (-22,-14)  {};  
            \node [vertex, fill=black] (3f2) at (-22,-18)  {};
            \node [vertex, fill=gray] (1f1) at (-18,-8)  {};
            \node [vertex, fill=gray] (1f2) at (-18,-16)  {};
            \node [vertex, fill=gray] (2f3) at (-14,-10)  {};
            \node [vertex, fill=gray] (3f3) at (-14,-14) {};
            \node [vertex, fill=black] (1f3) at (-10,-12)  {};

                \draw[-] (1f3) to (2f3);
                \draw[-] (1f3) to (3f3);
                \draw[-] (1f1) to (2f3);
                \draw[-] (1f1) to (2f1);
                \draw[-] (2f1) to (3f1);
                \draw[-] (3f1) to (2f2);
                \draw[-] (2f2) to (3f2);
                \draw[-] (1f3) to (2f3);
                \draw[-] (3f2) to (1f2);
                \draw[-] (1f2) to (3f3);
                \draw[-] (2f3) to (3f3);
                \draw[-] (1f1) to (3f1);
                \draw[-] (1f2) to (2f2);

\end{scope}

\begin{scope}[shift={(-18,-26)},rotate=90] 

            \node [vertex] (2f1) at (-22,-6)  {}; 
            \node [vertex] (3f1) at (-22,-10)  {}; 
            \node [vertex] (2f2) at (-22,-14) {};  
            \node [vertex] (3f2) at (-22,-18)  {};
            \node [vertex] (1f1) at (-18,-8) [label=left:$x$] {};
            \node [vertex] (1f2) at (-18,-16) [label=right:$y$] {};
            \node [vertex] (2f3) at (-14,-10)  {};
            \node [vertex] (3f3) at (-14,-14)  {};
            \node [vertex] (1f3) at (-10,-12) {};

                \draw[-] (1f3) to (2f3);
                \draw[-] (1f3) to (3f3);
                \draw[-] (1f1) to (2f3);
                \draw[-] (1f1) to (2f1);
                \draw[-] (2f1) to (3f1);
                \draw[-] (3f1) to (2f2);
                \draw[-] (2f2) to (3f2);
                \draw[-] (1f3) to (2f3);
                \draw[-] (3f2) to (1f2);
                \draw[-] (1f2) to (3f3);
                \draw[-] (2f3) to (3f3);
                \draw[-] (1f1) to (3f1);
                \draw[-] (1f2) to (2f2);

\end{scope}

\end{tikzpicture}

\caption{Vertex domination in $H_u$.}
\label{Fig:Gadget}
\end{figure}
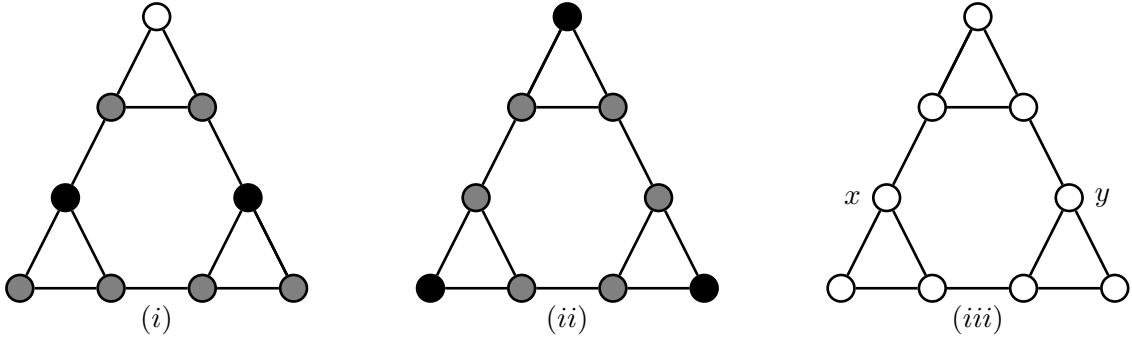

\begin{claim} \label{cla:1}
	For every $u \in V(G)$, it holds that $\gamma(H_u) = 3$.
\end{claim}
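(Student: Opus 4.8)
The plan is to exploit the rigid structure of $H_u$. It has nine vertices, namely the three expanded cliques $T_i = \{u_{i,1},u_{i,2},u_{i,3}\}$ of $H$ associated with $u_1,u_2,u_3 \in V(G')$, each of which is a triangle. Since $\{u_1,u_2,u_3\}$ is an expanded clique of $G'$, the three vertices $u_1,u_2,u_3$ form a $K_3$ in $G'$, so each of the edges $u_1u_2$, $u_2u_3$, $u_1u_3$ contributes, under the $3$-expanded-clique operation producing $H$, a single edge joining one vertex of $T_i$ to one vertex of $T_j$. Thus $H_u$ is a ``triangle of triangles''. First I would record this description explicitly, observing that the three bridge edges use two \emph{distinct} vertices of each $T_i$: in an expanded-clique graph every vertex has at most one neighbor outside its own expanded clique, so a single vertex of $T_i$ can be the endpoint of at most one of the two bridges incident with $T_i$.

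For the upper bound $\gamma(H_u) \le 3$ I would simply choose one vertex from each triangle. Because each $T_i$ is complete, a single chosen vertex of $T_i$ dominates all of $T_i$, and since $T_1,T_2,T_3$ partition $V(H_u)$, these three vertices dominate the whole gadget. This step is immediate.

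The substance is the lower bound $\gamma(H_u) \ge 3$, and the key observation—which I expect to be the main obstacle—is that each triangle $T_i$ contains a vertex $w_i$ whose closed neighborhood inside $H_u$ is exactly $T_i$. The point is that within $H_u$ only two vertices of $T_i$ carry a bridge (one toward each of the other two triangles), so the remaining vertex has no neighbor outside $T_i$ in $H_u$. This must be argued uniformly over two cases: when $u_i$ has degree $2$ in $G'$ that remaining vertex is genuinely simplicial, while when $u_i$ has degree $3$ in $G'$ it carries an edge to a neighboring gadget $H_v$ arising from an edge of $G$; but such an external edge leaves $\{u_{i,j} : j \in [3]\}$ and is therefore absent from the induced subgraph $H_u$. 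In both cases $N_{H_u}[w_i] = T_i$.

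Granting the existence of $w_1,w_2,w_3$, I would finish as follows. The closed neighborhoods $N_{H_u}[w_1]$, $N_{H_u}[w_2]$, $N_{H_u}[w_3]$ equal $T_1,T_2,T_3$ and hence are pairwise disjoint. Any dominating set $D$ of $H_u$ must contain, for each $i$, some vertex of $N_{H_u}[w_i]$ in order to dominate $w_i$; disjointness then forces $|D| \ge 3$. Combining this with the upper bound gives $\gamma(H_u) = 3$, as claimed.
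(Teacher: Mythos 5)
Your proof is correct, and the upper bound (one vertex per triangle) is exactly the paper's. The lower bound, however, is argued by a different route. The paper simply notes that $|V(H_u)|=9$ and $\Delta(H_u)=3$, so a single vertex dominates at most $\Delta(H_u)+1=4$ vertices and two vertices dominate at most $8<9$; hence $\gamma(H_u)\ge 3$. You instead exhibit a $2$-packing: in each triangle $T_i$ the two bridges internal to $H_u$ end at two \emph{distinct} vertices (since a vertex of an expanded clique has at most one neighbor outside it), so the third vertex $w_i$ satisfies $N_{H_u}[w_i]=T_i$, and the three pairwise disjoint closed neighborhoods force $|D|\ge 3$. Both arguments rest on the same structural fact about $H_u$ (the paper needs it implicitly to assert $\Delta(H_u)=3$), and both are equally short here; yours is slightly more informative in that it is local (it does not need the total vertex count) and it isolates precisely the vertices whose only dominators lie inside their own triangle, which is the mechanism the paper reuses in the proof of Claim~2. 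Your case split on $d_{G'}(u_i)\in\{2,3\}$ is handled correctly: an edge leaving $\{u_{i,j}: j\in[3]\}$ toward another gadget disappears in the induced subgraph $H_u$, so $w_i$ is isolated from $T_j$, $j\ne i$, either way. No gaps.
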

	
\begin{proofClaim}{cla:1}
	We know that $\gamma(H_u) \ge 3$ because $\Delta(H_u) = 3$ and $|V(H_u)| = 9$.
	On the other hand, note that any set formed by one vertex of each $K_3$ is a dominating set, and, therefore,  $\gamma(H_u)=3$. See Figure~\ref{Fig:Gadget}-$(ii)$.
\end{proofClaim}

\begin{claim} \label{cla:2}
If $D$ is a dominating set of $H$, then $|V(H_u) \cap D| \ge 2$ for every $u \in V(G)$. Furthermore, if $|V(H_u)\cap D| = 2$ for some $u \in V(G)$, then $V(H_u) \setminus N[V(H_u) \cap D]$ has only one vertex and such vertex is dominated by some vertex in $H_v$ whose $|V(H_v)\cap D| \ge 3$.
\end{claim}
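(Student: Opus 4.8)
The plan is to argue entirely inside one gadget $H_u$, exploiting its rigid structure. Write $T_i=\{u_{i,1},u_{i,2},u_{i,3}\}$ for the three expanded cliques of $H$ that make up $H_u$; each $T_i$ induces a triangle, and since $\{u_1,u_2,u_3\}$ is a $K_3$ in $G'$, the triangles $T_1,T_2,T_3$ are pairwise joined by exactly one edge, so $H_u$ is a triangle of triangles. Inside $H_u$ each $T_i$ thus spends two of its vertices as \emph{connectors} (the endpoints of the two joining edges) and keeps one \emph{free} vertex; a connector has degree $3$ in $H_u$ and a free vertex degree $2$. Crucially, only a free vertex can have a neighbor outside $H_u$, and each $T_i$ has exactly one free vertex, so \emph{at most three} vertices of $H_u$ are incident to an edge leaving $H_u$. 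This bound drives everything.

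First I would prove $|V(H_u)\cap D|\ge 2$ by counting. A vertex dominates at most $\Delta(H_u)+1=4$ vertices of $H_u$, so if $|V(H_u)\cap D|\le 1$ then at least $9-4=5$ vertices of $H_u$ are undominated from within $H_u$; as at most three of them can be dominated from outside, at least $5-3=2$ vertices stay undominated in $H$, contradicting that $D$ dominates $H$.

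Next, assume $|V(H_u)\cap D|=2$ and set $U=V(H_u)\setminus N[V(H_u)\cap D]$. Every vertex of $U$ must be dominated from outside $H_u$, hence is free. By Claim~\ref{cla:1} we have $\gamma(H_u)=3$, so two vertices cannot dominate $H_u$ and $|U|\ge 1$. To get $|U|\le 1$ I would rule out two free vertices lying in $U$: by the symmetry of $H_u$ we may assume these are the free vertices of $T_2$ and $T_3$; then neither $T_2$ nor $T_3$ meets $D$ (else the corresponding free vertex would be dominated from within), forcing both vertices of $V(H_u)\cap D$ into $T_1$. But then the connector of $T_2$ toward $T_3$ has all its neighbors in $T_2\cup T_3$, so it is undominated from within while being a connector with no outside neighbor, contradicting that $U$ contains only free vertices. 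Since $|U|=3$ would force $V(H_u)\cap D=\varnothing$, we conclude $|U|=1$ and the unique undominated vertex $y$ is free.

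It remains to handle the last sentence. Let $w\in D$ be the vertex outside $H_u$ dominating $y$; then $w$ lies in a gadget $H_v$ and is a free vertex of $H_v$, so it has degree $2$ in $H_v$. The main obstacle is showing $|V(H_v)\cap D|\ge 3$, as this is the only place where two gadgets interact. I would apply the previous paragraph to $H_v$: if $|V(H_v)\cap D|=2$ then exactly one vertex of $H_v$ is undominated from within, i.e. the two chosen vertices dominate $8$ of the $9$ vertices of $H_v$; since $4+4=8$ this forces both to have degree $3$ in $H_v$, that is, both are connectors, contradicting that $w\in V(H_v)\cap D$ is free. Hence $|V(H_v)\cap D|\ge 3$, completing the argument.
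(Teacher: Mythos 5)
Your proof is correct and follows essentially the same route as the paper: a per-gadget analysis of $H_u$ that separates the six connector vertices (no neighbour outside $H_u$) from the three free vertices and plays this off against the degree bound $\Delta(H_u)=3$. Where the paper merely asserts that two dominators inside a gadget cannot lie in $U_2$ and must, up to symmetry, sit as in Figure~\ref{Fig:Gadget}-$(iii)$, your two counting arguments (two undominated free vertices would force $D$ into a single triangle and strand a connector; $8=4+4$ forces both dominators to be connectors) supply exactly the verification the paper leaves implicit.
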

	
\begin{proofClaim}{cla:2}
For any $u \in V(G)$, denote by $U_3$ the subset of $V(H_u)$ having neighbors only in $H_u$ and write $U_2 = V(H_u) \setminus U_3$.
Since $\Delta(H_u) = 3$ and $|U_3| = 6$, we conclude that $|V(H_u) \cap D| \ge 2$.
Now, consider that $V(H_w) \cap D = \{x,y\}$ for some $w \in V(G)$. Observe that if $x \in W_2$, then $D$ would not be a dominating set of $H$. Therefore, by symmetry, we can assume that $x,y$ are the vertices depicted in Figure~\ref{Fig:Gadget}-$(iii)$
and $V(H_w) \setminus N[V(H_w) \cap D]$ has only one vertex and such vertex is dominated by some vertex in $H_v$ for $v \ne w$.
Since a vertex of $V_2$ belongs to $D$, we have that $|V(H_v)\cap D| \ge 3$.
\end{proofClaim}
	
Now, we shall prove that $G$ has a dominating set with at most $\ell$ vertices if and only if $H$ has a dominating set with at most $2|V(G)|+\ell$ vertices.
	
\bigskip
\noindent$(\Rightarrow)$ Consider that $D$ is a dominating set of $G$ with $|D| \leq \ell$.
Starting with $D'$ empty, for each $v\in D$, add to $D'$ the vertices of $H_v$ with degree 2.
Now, for each $u\in V (G)\setminus D$, let $v \in D$ such that $uv \in E(G)$. Denote by $x$ the vertex of $H_u$ having a neighbor in $H_v$. Then, at to $D'$ the two vertices of degree 3 in $H_u$ having a common neighbor with $x$.
Observe that $|D'|=2(|V(G)|-|D|)+3|D|=2|V(G)|+|D|\leq 2|V(G)|+\ell$ and, furthermore, $D'$ is a dominating set of $H$.
	
\bigskip
\noindent$(\Leftarrow)$ Now, let $D'$ be a dominating set of $H$ such that $|D'|\leq 2|V(G)|+\ell$.
By Claim~\ref{cla:2}, we know that $|V(H_u)\cap D'| \ge 2$ for every $u \in V(G)$. Therefore, the number of vertices $v \in V(G)$ such that $|V(H_v)\cap D'| \ge 3$ is at most $\ell$.
Claim~\ref{cla:2} also says that if $|V(H_u)\cap D'| = 2$ for $u \in V(G)$, then $H_u$ has a vertex dominated by some vertex in $H_v$ where $|V(H_v)\cap D_{H}| \ge 3$, which means that $uv$ is an edge of $G$.
Therefore, choosing set $D \subseteq V(G)$ composed by vertices $v$ such that $|V(H_v)\cap D_{H}| \ge 3$ we have a dominating set $D$ of $G$ with size at most $\ell$.
\end{proof}

\begin{theorem}\label{NPcompletudeDom}
{\sc Dominating set} is $\NP$-complete for planar bipartite $3$-expanded-clique graphs with girth at least $k$ for a fixed $k$.
\end{theorem}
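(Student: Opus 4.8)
The plan is to combine Reduction~A with a known hard restriction of {\sc Dominating set}. Membership in $\NP$ is immediate, since a dominating set of size at most $\ell'$ is a polynomially checkable certificate. For hardness I would reduce from {\sc Dominating set} restricted to class ${\cal C}_2$, that is, to planar bipartite graphs with maximum degree~$3$ and girth at least $k$, which is $\NP$-complete by~\cite{ZVERVICH:1995,KRATOCHVIL1994}. Given an instance $\langle G,\ell\rangle$ with $G \in {\cal C}_2$, I apply Reduction~A verbatim: let $G'$ be the $3$-expanded-clique of $G$, let $H$ be the $3$-expanded-clique of $G'$, and set $\ell' = 2|V(G)|+\ell$. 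Both expansions are well defined because $\Delta(G)\le 3$ forces $\Delta(G')\le 3$, so that $f\equiv 3\ge d(\cdot)$ holds at each step, and $H$ is a $3$-expanded-clique graph by construction with $|V(H)| = 9|V(G)|$, so the transformation is polynomial. The equivalence ``$G$ has a dominating set of size at most $\ell$ if and only if $H$ has one of size at most $\ell'$'' is exactly the content of Reduction~A; hence correctness comes for free and the only remaining task is to certify that $H$ lies in the stated graph class.

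Next I would verify the structural properties of $H$. That $H$ is a line graph of a bipartite graph is Theorem~\ref{GrafoLinhaDoPro}, which is what the descriptor ``bipartite'' records here, and $\Delta(H)\le 3$ is immediate from the definition, since a vertex of an expanded $K_3$ has two neighbours inside its clique and at most one outside. For planarity I would fix a planar embedding of $G$ and carry out the expansion locally: around each vertex $v$, whose at most three incident edges leave in a fixed cyclic order, replace $v$ by a triangle and attach those edges to distinct triangle vertices inside a small disk about $v$, introducing no crossings; doing this simultaneously at every vertex shows $G'$ is planar, and repeating the argument on $G'$ shows $H$ is planar. Together with $H \in {\cal C}_1$ (it is a line graph of a bipartite graph of maximum degree at most~$4$), this also recovers the claimed proper-subclass relations.

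The step I expect to be the crux is the girth condition, because the expanded cliques are triangles and so must be excluded from the count: the point is that, apart from these clique-triangles, $H$ has no short closed walks. I would prove this by a projection argument. Each inter-clique edge of $H$ projects to an edge of $G'$, and each inter-gadget edge of $G'$ projects to an edge of $G$. Hence any cycle of $H$ that is not confined to a single gadget $H_u$ induces, after projection, a closed walk in $G$, which therefore uses at least $\mathrm{girth}(G)\ge k$ distinct edges of $G$; tracing these back shows the cycle of $H$ has length at least $k$. Thus the bipartite graph $B$ with $H=L(B)$ that arises from the expansion inherits girth at least $k$ from $G$, so for every fixed $k$ the graph $H$ lands in the required subclass, and the main obstacle is precisely making this girth transfer (and the local planarity argument) rigorous while correctly discounting the clique-triangles inherent to any $3$-expanded-clique graph. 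This completes the $\NP$-completeness proof.
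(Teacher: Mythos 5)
Your overall route is the same as the paper's: reduce from {\sc Dominating set} on class ${\cal C}_2$ (planar bipartite graphs of maximum degree $3$ and girth at least $k$), apply Reduction~A verbatim, and argue that the resulting graph $H$ lies in the advertised class. The paper's own proof stops at a bare assertion that $H$ ``is a planar bipartite $3$-expanded-clique graph''; your checks that the expansion is well defined (since $\Delta(G)\le 3$ forces $\Delta(G')\le 3$, so $f\equiv 3$ is admissible at both steps), that $|V(H)|=9|V(G)|$, and your local-disk argument for planarity are correct and actually fill a hole the paper leaves open. On the reduction and its correctness you are in full agreement with the paper.

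The genuine gap is in the step you yourself flag as the crux, and it cannot be repaired as stated. Every $3$-expanded-clique graph contains the triangles $V_i$, so $H$ is never bipartite and never has girth exceeding $3$; worse, even after discounting the clique-triangles, each gadget $H_u$ (the $3$-expanded-clique of a triangle of $G'$) contains an induced $C_6$ obtained by alternating intra- and inter-clique edges around its three triangles, so the premise of your projection argument --- that every short cycle of $H$ not confined to a gadget accounts for everything beyond the clique-triangles --- is false, and your fallback claim that the bipartite preimage $B$ with $H=L(B)$ inherits girth at least $k$ also fails: $B$ is essentially the subdivision of $G'$, and the triangles of $G'$ become $6$-cycles of $B$ regardless of $k$. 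This is a defect of the theorem statement rather than of your reduction --- note that the paper's concluding sentence quietly retreats to ``$\NP$-complete for planar $3$-expanded-clique graphs,'' dropping both ``bipartite'' and the girth clause --- but since membership in the class as literally written cannot be certified, you should either prove the weaker, coherent statement (planar $3$-expanded-clique graphs of maximum degree $3$, which are line graphs of bipartite graphs by Theorem~\ref{GrafoLinhaDoPro}) or explicitly reinterpret ``planar bipartite \ldots{} with girth at least $k$'' as attributes of the source instance $G$, which is presumably what the authors intend.
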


\begin{proof}[proof]
Since {\sc Dominating set} belongs to $\NP$ for general graphs, that condition holds for our particular case. For the hardness part, we consider the {\sc Dominating set} problem restricted to planar bipartite graphs with maximum degree $3$ and girth at least $k$ for a fixed $k$ since this version is $\NP$-complete~\cite{ZVERVICH:1995}. Let $\left\langle G, \ell \right\rangle$ be an instance of this problem.
Observe that graph $H$ constructed by 
applying Reduction~A to $\left\langle G, \ell \right\rangle$, is a planar bipartite 3-expanded-clique graph, which means that 
{\sc Dominating set} is $\NP$-complete for planar $3$-expanded-clique graphs.
\end{proof}

\begin{theorem}\label{NPDomLBC}
{\sc Dominating set} is $\NP$-complete for cubic line graphs of bipartite graphs. 
\end{theorem}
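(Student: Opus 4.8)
The plan is to adapt the proof of Theorem~\ref{NPcompletudeDom}, changing only the source problem from which we reduce. Since {\sc Dominating set} is $\NP$-complete for cubic graphs (class ${\cal C}_3$)~\cite{ZVERVICH:1995,KRATOCHVIL1994}, and membership in $\NP$ holds for general graphs, it suffices to give a polynomial-time reduction from this restricted version whose output graph $H$ is a cubic line graph of a bipartite graph. I would take Reduction~A verbatim: given an instance $\langle G, \ell\rangle$ with $G$ cubic, let $G'$ be the $3$-expanded-clique of $G$, let $H$ be the $3$-expanded-clique of $G'$, and set $\ell' = 2|V(G)| + \ell$. The equivalence ``$G$ has a dominating set of size at most $\ell$ if and only if $H$ has one of size at most $\ell'$'' is already supplied by Reduction~A, so the whole task reduces to verifying that $H$ belongs to the target class.

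That $H$ is a line graph of a bipartite graph is immediate: $H$ is an expanded-clique graph by construction, so Theorem~\ref{GrafoLinhaDoPro} applies. The substantive step is to show that $H$ is $3$-regular, and here the key observation is that the $3$-expanded-clique operation sends cubic graphs to cubic graphs. Indeed, if $G$ is cubic then $f(v_i) = 3 = d_G(v_i)$ for every $v_i$, so $G'$ is precisely the inflated graph of $G$: each expanded clique $V_i$ is a $K_3$ containing no simplicial vertex, and each of its three vertices has two neighbors inside $V_i$ and exactly one outside (coming from one of the three edges incident with $v_i$), hence degree $3$. Thus $G'$ is cubic, and applying the identical argument to $G'$ shows that $H$ is cubic.

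Combining the two observations, $H$ is a cubic line graph of a bipartite graph, and the construction is clearly polynomial in $|V(G)|$; together with the equivalence from Reduction~A and membership in $\NP$, this proves the theorem. I expect the only delicate point to be the degree count in the second paragraph --- checking that at each application of the operation every vertex of an expanded clique gains exactly one external neighbor, so that no vertex ends up with degree other than $3$. Once this bookkeeping is confirmed, the statement follows as essentially a specialization of Reduction~A and Theorem~\ref{GrafoLinhaDoPro} to cubic inputs.
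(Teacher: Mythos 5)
Your proposal is correct and follows essentially the same route as the paper: reduce from {\sc Dominating set} on cubic graphs, apply Reduction~A, invoke Theorem~\ref{GrafoLinhaDoPro} for the line-graph-of-bipartite property, and check that the $3$-expanded-clique operation preserves $3$-regularity. Your explicit degree count is in fact a welcome elaboration of the paper's ``it is easy to see that $H$ is a cubic graph.''
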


\begin{proof}[proof]
As note in the previous result, this problem belongs to $\NP$. We consider the {\sc Dominating set} problem restricted to cubic graphs since this version is also $\NP$-complete~\cite{KRATOCHVIL1994}. Let $\left\langle G, \ell \right\rangle$ be an instance of this problem.
The graph $H$ constructed by 
applying Reduction~A to $\left\langle G, \ell \right\rangle$ is a 3-expanded-clique graph. By Theorem~\ref{GrafoLinhaDoPro}, $H$ is a line graph of a bipartite graph.
It is easy to see that $H$ is a cubic graph, which means that 
{\sc Dominating set} is $\NP$-complete for planar $3$-expanded-clique graphs.
\end{proof}

\begin{proposition} \label{pro:smaller2}
Let $H$ be the expanded-clique graph of a graph $G$ and function $f$. If $S$ is a dominating set of $H$, then there is a dominating set $S'$ of $H$ such that $|S'| \le |S|$ and $|S' \cap V_i| \le 1$ for every $v_i \in V(G)$.
\end{proposition}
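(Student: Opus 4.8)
The plan is to start from $S$ and repeatedly apply a local repair that removes an \emph{excess} vertex from any clique $V_i$ currently containing at least two chosen vertices, while never increasing the size nor destroying the domination property. The crucial structural fact I would exploit is that in an expanded-clique graph every vertex of $V_i$ has \emph{at most one} neighbor outside $V_i$: namely its external neighbor $v_{j,i}$ when it is some $v_{i,j}$, and none when it is a simplicial vertex $v'_{i,s}$ (this is exactly the observation $d(v)\in\{|V_i|-1,|V_i|\}$ made after the definition). Since $V_i$ is a clique, a single vertex of $V_i$ already dominates all of $V_i$; hence an excess vertex of $S\cap V_i$ is responsible for dominating at most one vertex outside its own clique.

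Concretely, suppose the current dominating set $S$ satisfies $|S\cap V_i|\ge 2$, and fix distinct $a,b\in S\cap V_i$. I distinguish two cases for $b$. If $b$ is simplicial, or if $b=v_{i,j}$ and its external neighbor $v_{j,i}$ is already dominated by $S\setminus\{b\}$, then I simply delete $b$: the clique $V_i$ stays dominated by $a$, the vertex $v_{j,i}$ (if any) is still dominated by hypothesis, and nothing else was adjacent to $b$, so $S\setminus\{b\}$ is again dominating and strictly smaller. Otherwise $b=v_{i,j}$ and $v_{j,i}$ loses its only dominator when $b$ is removed; since the closed neighborhood of $v_{j,i}$ is $V_j\cup\{b\}$, this situation forces $S\cap V_j=\varnothing$, because any vertex of $S\cap V_j$ would dominate $v_{j,i}$ through the clique $V_j$. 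In this case I perform a swap, $S\leftarrow(S\setminus\{b\})\cup\{v_{j,i}\}$. The size is unchanged; $V_i$ and $b$ itself remain dominated by $a$; $v_{j,i}\in S$ now dominates itself and, lying in the clique $V_j$, all of $V_j$; and the only external vertex $b$ used to dominate was $v_{j,i}$, which is now in $S$. Thus $S$ remains a dominating set of $H$ of the same cardinality.

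To guarantee termination and obtain the claimed $S'$, I would track the potential $\Phi(S)=\sum_{v_i\in V(G)}\max\{|S\cap V_i|-1,\,0\}$, the total number of excess vertices. A deletion step lowers $|S\cap V_i|$ by one on a clique with at least two chosen vertices, so $\Phi$ drops by one; a swap step lowers $|S\cap V_i|$ by one on a clique with at least two chosen vertices while raising $|S\cap V_j|$ from $0$ to $1$, which creates no new excess, so again $\Phi$ drops by one. As $\Phi$ is a non-negative integer that strictly decreases at each step, the repair halts after finitely many iterations; at that point $|S\cap V_i|\le 1$ for every $v_i\in V(G)$, and the resulting $S'$ is a dominating set with $|S'|\le|S|$. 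The main point to get right is the invariant that a swap is triggered only when the target clique $V_j$ holds no vertex of $S$: this is precisely what prevents the operation from merely relocating a violation and what keeps $\Phi$ monotone, so I would state and verify it explicitly before invoking the potential argument.
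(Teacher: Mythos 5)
Your proposal is correct and follows essentially the same argument as the paper: iteratively delete an excess vertex of $S\cap V_i$ when its external neighbor is already covered, and otherwise swap it for $v_{j,i}$, which is safe precisely because $S\cap V_j=\varnothing$ in that case. Your explicit potential function $\Phi$ only makes rigorous the termination step that the paper dismisses with ``observe that this process eventually finishes.''
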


\begin{proof}
Let $S$ be a dominating set of $H$. Apply the following process. If $|S \cap V_i| \le 1$ for every $v_i \in V(G)$, then set $S' = S$ and stop. Otherwise, there is $v_i \in V(G)$ such that $|S \cap V_i| \ge 2$. Let $v_{i,j} \in S \cap V_i$. If $|S \cap V_j| \ge 1$, then $S \setminus \{v_{i,j}\}$ is also a dominating set of $H$. Then, redefine $S$ as $S \setminus \{v_{i,j}\}$ and repeat. If $|S \cap V_j| = 0$, then $(S \setminus \{v_{i,j}\}) \cup \{v_{j,i}\}$ is also a dominating set of $H$. Then, redefine $S$ as $(S \setminus \{v_{i,j}\}) \cup \{v_{j,i}\}$ and repeat. Observe that this process eventually finishes with a set $S'$ such that $|S'| \le |S|$ and $|S' \cap V_i| \le 1$ for every $v_i \in V(G)$.
\end{proof}

\begin{proposition} \label{PolyDominating}
If $H$ is the expanded-clique graph of a pair $(G,f)$ where $f(v) > d_G(v)$ for every $v \in V(G)$, then $\gamma(H) = |V(G)|$.
\end{proposition}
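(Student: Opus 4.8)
The plan is to establish the equality by proving two matching bounds of size $|V(G)|$, namely $\gamma(H) \le |V(G)|$ and $\gamma(H) \ge |V(G)|$. Write $n = |V(G)|$ and recall that the expanded cliques $V_1, \ldots, V_n$ partition $V(H)$, as already observed in the proof of Proposition~\ref{pro:C4}. For the upper bound I would exhibit a dominating set directly: since each $V_i$ induces a clique and the $V_i$ cover $V(H)$, picking one arbitrary vertex $w_i$ from each $V_i$ produces a set $D_0 = \{w_1, \ldots, w_n\}$ of size $n$. Any vertex $u \in V(H)$ belongs to some $V_i$, so $u$ is either $w_i$ itself or adjacent to $w_i$ within the clique $V_i$; hence $D_0$ dominates $H$ and $\gamma(H) \le n$.

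For the lower bound I would exploit the strict inequality $f(v_i) > d_G(v_i)$. This hypothesis gives $k_i = f(v_i) - d_G(v_i) \ge 1$, so every expanded clique $V_i$ contains at least one simplicial vertex $s_i$. Because $V_i$ is a clique and a simplicial vertex has no neighbor outside its own expanded clique, the closed neighborhood of $s_i$ is exactly $N[s_i] = V_i$. Consequently, in order to dominate $s_i$, any dominating set $D$ of $H$ must contain some vertex of $V_i$; that is, $D \cap V_i \ne \varnothing$ for every $i$. Since $V_1, \ldots, V_n$ are pairwise disjoint, these intersections are disjoint and nonempty, yielding $|D| \ge n$ and therefore $\gamma(H) \ge n$. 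Combining the two inequalities gives $\gamma(H) = n = |V(G)|$.

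I expect no serious computational obstacle here; the one point that must be handled carefully is the simplicial-vertex observation that drives the lower bound. The strict inequality $f(v_i) > d_G(v_i)$ is precisely what forces, in each clique, a vertex whose entire closed neighborhood lies inside that clique, and this is what prevents a dominating set from ``charging'' a single vertex to two different cliques at once. Verifying $N[s_i] = V_i$ reduces to the remark in the definition that every vertex of $V_i$ has at most one neighbor outside $V_i$, combined with the fact that a simplicial vertex of $V_i$ has none. (One could alternatively route the lower bound through Proposition~\ref{pro:smaller2}, restricting to dominating sets meeting each $V_i$ at most once, but the direct simplicial-vertex argument is cleaner and self-contained.)
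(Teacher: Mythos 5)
Your proof is correct and follows essentially the same route as the paper: the lower bound via the simplicial vertex in each expanded clique forcing every dominating set to meet every $V_i$, and the upper bound by selecting one vertex per clique (the paper picks a simplicial vertex, you pick an arbitrary one — both work since each $V_i$ is a clique and the $V_i$ partition $V(H)$). Your write-up is in fact a bit more detailed than the paper's, spelling out why $N[s_i] = V_i$.
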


\begin{proof}
Note that for every $v_i \in V(G)$, $V_i$ has a simplicial vertex, which implies that any dominating set $D$ of $H$ satisfies $D \cap V_i \ne \emptyset$. On the other hand, if we choose a simplicial vertex of each set $V_i$, we form a dominating set of $H$, which means that $\gamma(H) = |V(G)|$.
\end{proof}

A {\em $2$-independent set} in a graph $G$ is a subset $I$ of the vertices such that the distance between any two vertices of $I$ in $G$ is at least three. We denote by $\alpha_2(G,f)$ the maximum cardinality of a set $S$ such that every $v \in S$ satisfies $f(v) = d_G(v)$ and $S$ is a $2$-independent set of $G$.

\begin{theorem} \label{the:dom}
Let $H$ be the expanded-clique graph of a pair $(G,f)$. Then, $\gamma(H) + \alpha_2(G,f) = |V(G)|$.
\end{theorem}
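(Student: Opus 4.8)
The plan is to prove the two inequalities $\gamma(H) \le |V(G)| - \alpha_2(G,f)$ and $\gamma(H) \ge |V(G)| - \alpha_2(G,f)$ separately. The guiding idea is that, once a dominating set is normalized to meet each clique $V_i$ at most once, the cliques it \emph{misses} are exactly the cliques that must be dominated from outside, and such a collection of missed cliques is precisely an admissible set for $\alpha_2(G,f)$. This turns the equality into a matching between the ``uncovered'' cliques and the $2$-independent sets of $G$ on which $f$ agrees with the degree.

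For the upper bound, I would start from a set $S$ realizing $\alpha_2(G,f)$, so $S$ is a $2$-independent set of $G$ with $f(v) = d(v)$ for every $v \in S$. I then build a dominating set $D$ of $H$ with $|D| = |V(G)| - |S|$. Because $S$ is $2$-independent, every $v_i \notin S$ with a neighbor in $S$ has a \emph{unique} such neighbor $v_j$; for such $v_i$ I place into $D$ the vertex $v_{i,j} \in V_i$ joined to $V_j$, and for every remaining $v_i \notin S$ I place an arbitrary vertex of $V_i$ into $D$. Each $V_i$ with $v_i \notin S$ is then dominated internally (it is a clique meeting $D$), while each $V_j$ with $v_j \in S$ is dominated entirely from outside: since $f(v_j) = d(v_j)$, every vertex of $V_j$ is some $v_{j,i}$ joined to $v_{i,j} \in V_i$, and $v_{i,j}$ was placed in $D$ precisely because $v_j$ is the unique $S$-neighbor of $v_i$. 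Hence $D$ is dominating and $\gamma(H) \le |V(G)| - \alpha_2(G,f)$.

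For the lower bound, I invoke Proposition~\ref{pro:smaller2} to select a minimum dominating set $D$ with $|D \cap V_i| \le 1$ for every $v_i$, and set $A = \{v_i : D \cap V_i = \varnothing\}$, so that $|D| = |V(G)| - |A|$. The crux is to show that $A$ is admissible for $\alpha_2(G,f)$. If some $v_i \in A$ had $f(v_i) > d(v_i)$, its simplicial vertex would have no neighbor outside $V_i$ and could not be dominated, so $f(v_i) = d(v_i)$. Moreover $A$ is $2$-independent: if $v_i, v_{i'} \in A$ were adjacent, then $v_{i,i'}$ could only be dominated by $v_{i',i} \in V_{i'}$, impossible since $D \cap V_{i'} = \varnothing$; and if $v_i, v_{i'} \in A$ shared a common neighbor $v_j$, then dominating $V_i$ and $V_{i'}$ from outside would force the two distinct vertices $v_{j,i}, v_{j,i'}$ into $D$, contradicting $|D \cap V_j| \le 1$. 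Therefore $|A| \le \alpha_2(G,f)$ and $\gamma(H) = |V(G)| - |A| \ge |V(G)| - \alpha_2(G,f)$.

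Combining the two bounds yields $\gamma(H) + \alpha_2(G,f) = |V(G)|$. I expect the main obstacle to lie in the lower bound: one must argue that each uncovered clique simultaneously forces the $f = d$ condition and, together with the $|D \cap V_i| \le 1$ normalization, both the independence and the no-common-neighbor halves of $2$-independence. The no-common-neighbor step is exactly where the normalization is indispensable, which is why applying Proposition~\ref{pro:smaller2} at the very start is essential to the argument.
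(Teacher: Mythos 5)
Your proposal is correct and follows essentially the same route as the paper: both directions use the same constructions, namely normalizing a dominating set via Proposition~\ref{pro:smaller2} and reading off the missed cliques as a $2$-independent set with $f=d$, and conversely dominating each $V_j$ with $v_j$ in the $2$-independent set entirely from the neighboring cliques. Your write-up of the lower bound is in fact slightly more explicit than the paper's about why adjacency and common neighbors are both excluded.
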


\begin{proof}
Let $S$ be a dominating set of $H$. By Proposition~\ref{pro:smaller2}, there is a dominating set $S'$ of $H$ such that $|S'| \le |S|$ and $|S' \cap V_i| \le 1$ for every $v_i \in V(G)$.
It is clear that $|S'| \le n$. Note that for any $v_i \in V(G)$, if $f(v_i) > d_G(v_i)$, then $|S' \cap V_i| = 1$. Therefore, if $|S' \cap V_i| = 0$, then $f(v_i) = d_G(v_i)$.
Observe that if $f(v_i) = d_G(v_i)$, $S' \cap V_i = \varnothing$ and $v_iv_j \in E(G)$, then vertex $v_{j,i}$ belongs to $S$. Hence, the vertices $v_i$ with $|S' \cap V_i| = 0$ form a $2$-independent set $T$ in $G$ with cardinality $|T| = |V(G)| - |S'|$ containing only vertices $v_i$ such that $f(v_i) = d_G(v_i)$. Since $|S'| \le |S|$, we conclude that $\alpha_2(G,f) \ge |V(G)| - \gamma(H)$.

Now, let $T$ be a $2$-independent set of $G$ containing only vertices $v_i$ such that $f(v_i) = d_G(v_i)$ and set $S = \varnothing$. If $v_i \in T$, then for every $v_j \in N_G(v_i)$, add $v_{j,i}$ to $S$. If $v_j$ is not in $T$ neither has a neighbor in $T$, then choose a vertex of $V_j$ and add it to $S$. Since no vertex of $G$ has more than one neighbor in $T$, we have that $S$ is a dominating set of $H$ with $|V(G)| - |T|$ vertices. Therefore, $\gamma(H) \le |V(G)| - \alpha_2(G,f)$.
\end{proof}

\begin{theorem} \label{cor:lower}
If $H$ is the $\Delta$-expanded-clique graph of $G$, then $\frac{|V(G)|\Delta}{\Delta+1} \le \gamma(H) \le |V(G)|$.
\end{theorem}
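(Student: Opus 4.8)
The plan is to derive both bounds from the exact identity $\gamma(H) = |V(G)| - \alpha_2(G,f)$ furnished by Theorem~\ref{the:dom}, so that the whole statement reduces to estimating $\alpha_2(G,f)$. Here $H$ being the $\Delta$-expanded-clique graph of $G$ means $f(v_i) = \Delta$ for every $v_i \in V(G)$ (so in particular $\Delta \ge \Delta(G)$). The upper bound is then immediate: since $\alpha_2(G,f) \ge 0$, the identity gives $\gamma(H) = |V(G)| - \alpha_2(G,f) \le |V(G)|$.

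For the lower bound, rearranging the identity shows that $\gamma(H) \ge \frac{|V(G)|\Delta}{\Delta+1}$ is equivalent to the packing estimate $\alpha_2(G,f) \le \frac{|V(G)|}{\Delta+1}$, and this is the only real work. First I would fix a maximum set $S \subseteq V(G)$ realizing $\alpha_2(G,f)$, so $S$ is a $2$-independent set of $G$ and every $v \in S$ satisfies $f(v) = d_G(v)$. Because $f \equiv \Delta$, this forces $d_G(v) = \Delta$, hence $|N[v]| = \Delta + 1$ for each $v \in S$. The key observation is that $2$-independence makes the closed neighborhoods pairwise disjoint: if distinct $u, v \in S$ shared some $w \in N[u] \cap N[v]$, then $d_G(u,v) \le d_G(u,w) + d_G(w,v) \le 2$, contradicting that $u$ and $v$ are at distance at least three.

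With the closed neighborhoods $\{N[v] : v \in S\}$ pairwise disjoint and each of size exactly $\Delta + 1$, counting vertices yields $(\Delta+1)\,|S| = \sum_{v \in S} |N[v]| \le |V(G)|$, whence $\alpha_2(G,f) = |S| \le \frac{|V(G)|}{\Delta+1}$. Substituting this into $\gamma(H) = |V(G)| - \alpha_2(G,f)$ gives $\gamma(H) \ge |V(G)| - \frac{|V(G)|}{\Delta+1} = \frac{|V(G)|\Delta}{\Delta+1}$, completing the proof.

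I do not expect a genuine obstacle here, since Theorem~\ref{the:dom} does the heavy lifting; the only point requiring care is recognizing that the vertices counted by $\alpha_2(G,f)$ all have degree exactly $\Delta$, so that each contributes a closed neighborhood of the uniform size $\Delta+1$ to the packing. The degenerate case $\Delta > \Delta(G)$ (where no vertex has degree $\Delta$, forcing $\alpha_2(G,f)=0$ and $\gamma(H)=|V(G)|$) is absorbed automatically by the same inequality.
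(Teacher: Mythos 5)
Your proof is correct, but it takes a genuinely different route from the paper's. The paper does not invoke Theorem~\ref{the:dom} here: it gets the upper bound from Proposition~\ref{pro:smaller2} (a minimum dominating set can be normalized to meet each expanded clique at most once, hence has at most $|V(G)|$ vertices), and for the lower bound it works directly with such a normalized minimum dominating set $S$ of $H$, lets $X$ be the set of $v_i \in V(G)$ with $V_i \cap S = \varnothing$, observes that each $v_i \in X$ has degree $\Delta$ with pairwise disjoint neighborhoods each fully ``charged'' one dominator, and concludes via a ratio-of-sums computation that $|S|/|V(G)| \ge \Delta/(\Delta+1)$. You instead reduce everything to the identity $\gamma(H) = |V(G)| - \alpha_2(G,f)$ and prove the packing bound $\alpha_2(G,f) \le |V(G)|/(\Delta+1)$ by noting that the closed neighborhoods of a $2$-independent set of degree-$\Delta$ vertices are pairwise disjoint of size $\Delta+1$. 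The two arguments are combinatorially the same (your $2$-independent set $S$ is exactly the paper's $X$ after passing through the proof of Theorem~\ref{the:dom}), but yours is cleaner: it reuses the already-established theorem, replaces the paper's somewhat loosely typed ratio manipulation with a transparent disjoint-union count, and explicitly handles the degenerate case $\Delta > \Delta(G)$. What the paper's direct approach buys is independence from Theorem~\ref{the:dom}, at the cost of essentially re-deriving part of it.
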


\begin{proof}
The upper bound follows from Proposition~\ref{pro:smaller2}.
For the lower bound, let $S$ be a minimum dominating set of $H$.
Using Proposition~\ref{pro:smaller2} again, we can assume that $|S\cap V_j| \leq 1$ for every $v_j \in V(G)$. Denote by $X$ the set formed by the vertices $v_i \in V(G)$ such that $V_i \cap S = \emptyset$. Note that for every $v_i \in X$, we have that every $d(v_i) = \Delta(G)$, that every vertex $v_k \in N(v_i)$ is such that $|V_k \cap S| = 1$, and that $N(v_i) \cap N(v_j) = \emptyset$.  
On the other hand, for every $v_i \in S \setminus N[X]$, it holds that $|V_i \cap S| = 1$. From these facts, we can write

\[\frac{|S|}{|V(G)|} = \frac
{\underset{v_i \in X}{\sum} \Delta + \underset{v_i \in S \setminus N[X]}{\sum} 1}
{\underset{v_i \in X}{\sum} (\Delta + 1 )+ \underset{v_i \in S \setminus N[X]}{\sum} 1} \ge \frac{\Delta}{\Delta+1}
\]

\noindent which means that the lower bound also holds.
\end{proof}

A consequence of Proposition~\ref{PolyDominating} and Corollary~\ref{cor:lower} is that given a $k$-expanded-clique graph $H$, a set containing one vertex of each expanded clique of $H$ is a dominating set which is at most $\frac{1 + \Delta}{\Delta}$ from a minimum.

\end{document}